\documentclass[10pt]{article}
\usepackage[a4paper, margin=1.25in]{geometry}
\usepackage{amsmath, amssymb, amsthm}
\usepackage{rotating}
\usepackage{graphicx,comment}
\usepackage{color}
\usepackage{float}
\usepackage{caption}
\usepackage{subcaption}
\usepackage[round, authoryear]{natbib}
\usepackage{authblk}
\pdfoutput=1
\pagestyle{myheadings}
\markright{\hfill Stability Results on Radial Hele-Shaw Flows with Variable Viscosity \hfill}

\makeatletter
\let\ps@plain\ps@empty
\makeatother

\newtheorem{theorem}{Theorem}[section]

\def\grad{{\bf \nabla}}
\def\div{{\bf \nabla\cdot}}

\begin{document}

\title{Stability Results on Radial Porous Media and Hele-Shaw Flows with Variable Viscosity Between Two Moving Interfaces}

\author[1]{Craig Gin}
\author[2]{Prabir Daripa}
\affil[1]{University of Washington, Department of Applied Mathematics}
\affil[2]{Texas A\&M University, Department of Mathematics}

\date{}

\renewcommand\Authands{ and }

\maketitle

\begin{abstract}
We perform a linear stability analysis of three-layer radial porous media and Hele-Shaw flows with variable viscosity in the middle layer. A nonlinear change of variables results in an eigenvalue problem that has time-dependent coefficients and eigenvalue-dependent boundary conditions. We study this eigenvalue problem and find upper bounds on the spectrum. We also give a characterization of the eigenvalues and prescribe a measure for which the eigenfunctions are complete in the corresponding $L^2$ space. The limit as the viscous gradient goes to zero is compared with previous results on multi-layer radial flows. We then numerically compute the eigenvalues and obtain, among other results, optimal profiles within certain classes of functions.
\end{abstract}

\section{Introduction}
Saffman-Taylor instability is an instability that occurs when a less viscous fluid drives a more viscous fluid in a porous medium. This phenomenon occurs in many applications including carbon sequestration, filtration, hydrology, and petrology. One important application is oil recovery in which it is a common practice to use water to displace oil. Oil recovery was the application driving the seminal work on this type of instability by \cite{Saffman/Taylor:1958}. In order to simplify their experiments, Saffman and Taylor studied the instability in the context of Hele-Shaw flows - flows between two parallel plates with a small gap between them. Hele-Shaw flows are a good model of porous media flows because they are also governed by Darcy's Law. It is now well known that a positive interfacial viscous jump in the direction of rectilinear flow produces an unstable flow with interfacial tension stabilizing short waves.

In applications in which it is advantageous to suppress the instability, one strategy is to use intermediate fluids which are more viscous than the displacing fluid but less viscous than the displaced fluid in order to limit the size of the viscous jumps. In the case that the interfacial tension at the interfaces is comparable to that of two-layer flow, this strategy is effective. This has been demonstrated by \cite{daripa08:multi-layer} in the case of rectilinear flow using upper bounds on the growth rates which were also used to give stabilization criteria.

Another approach to limiting the instability is using a variable viscosity fluid as the intermediate fluid. 
This can be achieved, for example, in chemical enhanced oil recovery in which polymer is used to increase the viscosity of the displacing fluid and variable polymer concentration leads to variable viscosity. A viscous gradient allows for even smaller viscous jumps at the interfaces than a constant viscosity fluid but the layer itself becomes unstable. There are several studies on the stability of multi-layer variable viscosity porous media and Hele-Shaw flows in a rectilinear geometry. \cite{GH83:optimal} first theoretically studied the stability of three-layer flows in such a geometry in which the middle fluid has variable viscosity. However, they studied the restricted case in which the trailing interface is a miscible interface with no interfacial tension. \cite{dp06:gradient} dropped this restriction and studied the case of three-layer variable viscosity flow with both interfaces immiscible. 

Saffman-Taylor instability is also studied in a radial flow geometry in which there is a point source in the center of the flow and the fluid moves outward radially with circular interfaces between fluids. 
Radial flow is one of several cases investigated by \cite{Muskat:1934, Muskat46:homogeneous} for the two-layer problem but in the case of zero interfacial tension. \cite{Paterson:1981} later performed a linear stability analysis for two-layer radial flow with interfacial tension. There are relatively few works on the stability of {\bf multi-layer} radial flows. \cite{Cardoso/Woods:1995} studied the stability of three-layer radial flows in the limiting case in which the inner interface is completely stable and looked at the break up of the middle layer into drops. \cite{Beeson-Jones/Woods:2015} analyzed general three-layer flows, and \cite{gin-daripa:hs-rad} performed a linear stability analysis for an arbitrary number of fluid layers in radial geometry.

To date, there are no known studies of multi-layer radial flows with {\bf variable} viscosity. The development of the theory for radial flow lags behind that of rectilinear flow because of the challenges due to the time-dependence of the problem. In particular, the curvature of the interfaces, the length of the middle layer(s) of fluid, and the spatially dependent viscous profile are all time-dependent. Previous stability studies of radial Hele-Shaw flows \citep{Anjos/Dias/Miranda:2015,Beeson-Jones/Woods:2015,Beeson-Jones/Woods:2017,Cardoso/Woods:1995,Dallaston/McCue:2013,Kim/Funada/etc:2009,Paterson:1981} allow for a time-dependent growth rate of instabilities, typically making use of a quasi-steady-state approximation. Despite this approximation, the stability results have often shown agreement with both physical and numerical experiments. In what follows, we keep with this tradition and use a quasi-steady-state approximation to derive an eigenvalue problem that has time-dependent coefficients and thus time-dependent eigenvalues. Additionally, the eigenvalue problem has eigenvalue-dependent boundary conditions which make the study of the eigenvalue problem mathematically interesting and challenging. The eigenvalues depend on many different parameters including the viscous profiles of the fluid layers, the interfacial tension at each interface, the curvature of the interfaces, and the fluid injection rate. Therefore, a numerical exploration of the vast parameter space is necessary. It is also important for applications in chemical EOR to investigate which viscous profiles minimize the instability for a given set of parameters, as is done in by \cite{daripa:tipm2012} in the rectilinear case.

It is worth pointing out another significant motivation behind this study. The simplest model to study incompressible porous media flow is to use the Hele-Shaw model which is based on the incompressibility condition and the Darcy's law in each of the phases. This model does not allow any mixing between the oil and water phases macroscopically and maintains sharp interfaces between the phases. Another model, called the Buckley-Leverett model, builds on this Hele-Shaw model by adding a saturation equation, a nonlinear hyperbolic conservation law with non-convex flux function which allows mixing between the phases (macroscopically) due to rarefaction waves behind the leading saturation front sweeping the oil ahead \citep{dglm88:polymer}. The rarefaction waves creates a viscous profile behind the front with viscosity gradually increasing towards the moving front. This mixing region with a viscous profile is usually finite in length which grows with time. The study of the stability of such composite solutions to such conservation laws is relevant for porous media but much too difficult. An appropriate model for this problem is the one under study which models the effect of rarefaction waves with a viscous profile between two interfaces and the shock front with a material interface having appropriate viscous jump at the interface. Such a study was partially carried out by \cite{daripa08:multi-layer} in the rectilinear geometry but the present study in the radial geometry is much harder as mentioned above and is also more relevant for porous media flow.

The paper is laid out as follows. In section \ref{sec:prelim} we perform a linear stability analysis of a point source driven three-layer {\bf radial} Hele-Shaw flow in which fluid between the two interfaces has a variable viscous profile. We use a time-dependent coordinate transformation to freeze the basic motion of the two interfaces in this new coordinate system and derive the associated eigenvalue problem in this new coordinate system. The growth rate of disturbances in the transformed coordinate system is related to the physical growth of disturbances of the interfaces in section \ref{sec:gr}. Section \ref{sec:constant} gives the restriction of the problem to the case of constant viscosity. In section \ref{sec:UpperBounds}, upper bounds on the growth rate are derived using the variational form of the problem. The nature of the eigenvalues and the completeness of the eigenfunctions are investigated in section \ref{sec:Char}. Numerical evaluation of the eigenvalues and their dependence on certain physical parameters are given in section \ref{sec:numerical}, and then we conclude in section \ref{sec:conclusions}.

\section{Preliminaries}\label{sec:prelim}
We consider a radial Hele-Shaw flow consisting of three regions of incompressible, immiscible fluid. By averaging across the gap, we may consider a two-dimensional flow domain in polar coordinates, $\Omega := (r,\theta) = \mathbb{R}^{2}$. The least viscous fluid with constant viscosity $\mu_i$ is injected into the center of the cell at a constant injection rate, $Q$. The most viscous fluid, with constant viscosity $\mu_o$, is the outermost fluid. The middle layer fluid has a smooth, axisymmetric viscous profile $\mu(r)$ where $\mu_i < \mu(r) < \mu_o$.
The fluid flow is governed by the following equations
\begin{equation}\label{VariableViscosity:eq:main}
\div{\bf u} = 0,\qquad \grad\;p=-\mu\;{\bf u}, \qquad \frac{\partial \mu}{\partial t} + \mathbf{u} \cdot \nabla \mu = 0,  \qquad \text{for } r \neq 0.
\end{equation}
The first equation $\eqref{VariableViscosity:eq:main}_1$ is the continuity equation for incompressible flow, the second equation $\eqref{VariableViscosity:eq:main}_2$ is Darcy's Law, and the third equation 
$\eqref{VariableViscosity:eq:main}_3$ is an advection equation for viscosity. Note that Darcy's Law for Hele-Shaw flows contains a permeability term $K = b^2/12$, but for the sake of simplicity here we have scaled viscosity by this term. Therefore, in what follows $\mu$ denotes the modified viscosity. We start with the fluids separated by circular interfaces with radii $R_1(0)$ and $R_2(0)$, where $R_1(t)$ and $R_2(t)$ are the positions of the interfaces at time $t$. This set-up is 
shown in Figure \ref{VariableViscosity:3Layer_Initial}.
\begin{figure}[!ht]
\centering
\includegraphics[width=.7\textwidth]{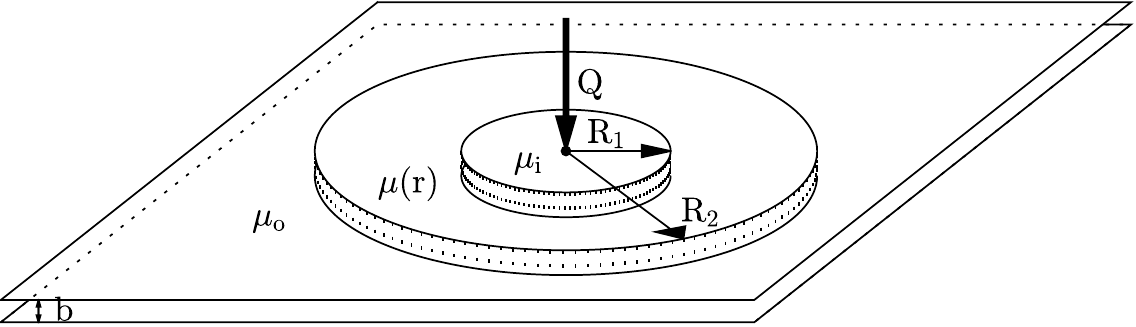}
\caption{The basic solution for three-layer flow}
\label{VariableViscosity:3Layer_Initial}
\end{figure}

The equations admit a simple basic solution in which all of the fluid moves outward radially with velocity $\mathbf{u} := \left(u_r,u_{\theta}\right) = \left(Q/(2 \pi r),0\right)$. The 
interfaces remain circular and their radii are given by $R_1(t) = \sqrt{Qt/\pi + R_1(0)^2}$ and $R_2(t) = \sqrt{Qt/\pi + R_2(0)^2}$. The pressure, $p_b = p_b(r)$, may be obtained by integrating equation $\eqref{VariableViscosity:eq:main}_2$.

We define the quantity $R_0(t) = \sqrt{Qt/\pi}$ and define the following coordinate transformation:
\begin{align}
 &\zeta = \frac{r^2 - R_0^2(t)}{R_2^2(t) - R_0^2(t)} = \frac{r^2 - R_0^2(t)}{R_2^2(0)}, \label{VariableViscosity:zeta} \\
 &\alpha = \theta, \label{VariableViscosity:alpha} \\
 &\tau = t. \label{VariableViscosity:tau}
\end{align}
The basic solution in these coordinates is $(u_{\zeta},u_{\alpha}) = (0,0)$ with the interfaces stationary at $\zeta = \zeta_1 := R_1^2(0)/R_2^2(0)$ and $\zeta = 1$.
 $\mu = \mu(\zeta)$ is now independent of time. 

We perturb this basic solution $\left(u_{\zeta} = 0, u_{\alpha} = 0, p_b, \mu \right)$ by $\left(\tilde{u}_{\zeta},\tilde{u}_{\alpha},\tilde{p}, \tilde{\mu}\right)$ where the disturbances are assumed to be small.
The linearized equations which govern these disturbances are
\begin{equation}\label{VariableViscosity:lindist}
\left.
 \begin{array}{l}
  \frac{\partial \tilde{u}_{\zeta}}{\partial \zeta} + \frac{1}{\zeta} \frac{\partial \tilde{u}_{\alpha}}{\partial \alpha} = 0 \\
  \frac{\partial \tilde{p}}{\partial \zeta} = -\frac{R_2^4(0)}{4(\zeta R_2^2(0) + R_0^2(\tau))} \mu \tilde{u}_{\zeta}  - \frac{QR_2^2(0)}{4 \pi (\zeta R_2^2(0) + R_0^2(\tau))} \tilde{\mu} \\
  \frac{\partial \tilde{p}}{\partial \alpha} = - \frac{\zeta R_2^2(0) + R_0^2(\tau)}{\zeta} \mu \tilde{u}_{\alpha} \\
  \frac{\partial \tilde{\mu}}{\partial \tau} + \tilde{u}_{\zeta} \frac{\partial \mu}{\partial \zeta}  = 0.
 \end{array} 
 \right\}
\end{equation}

We use separation of variables and decompose the disturbances into Fourier modes in the $\alpha$ coordinate so the disturbances are of the form
\begin{equation}\label{VariableViscosity:normalmodes}
 \left(\tilde{u}_{\zeta},\tilde{u}_{\alpha},\tilde{p}, \tilde{\mu}\right) = \big(f(\zeta,\tau), \delta(\zeta,\tau), \psi(\zeta,\tau),\phi(\zeta,\tau)\big) e^{in\alpha}. 
\end{equation}
Using this ansatz in equation \eqref{VariableViscosity:lindist} yields the following relations:
\begin{align}
&\frac{\partial \phi(\zeta,\tau)}{\partial \tau} = - \frac{d \mu}{d \zeta} f(\zeta,\tau), \label{VariableViscosity:modes3} \\
\begin{split}
&\frac{\partial}{\partial \zeta} \left\{ \left( \zeta R_2^2(0) + R_0^2(\tau) \right) \mu \frac{\partial f(\zeta,\tau)}{\partial \zeta} \right\} -\frac{n^2R_2^4(0)}{4(\zeta R_2^2(0) + R_0^2(\tau))} \mu f(\zeta,\tau)  \\ = &\frac{Qn^2R_2^2(0)}{4 \pi (\zeta R_2^2(0) + R_0^2(\tau))} \phi(\zeta,\tau). 
\end{split} \label{eigderiv}
\end{align}
In the innermost and outermost layers, the viscosity is constant and therefore $\phi(\zeta,\tau) \equiv 0$. In these regions, the solution of \eqref{eigderiv} is of the form
\begin{equation}\label{VariableViscosity:constGenSoln}
 f(\zeta,\tau) = \widetilde{C_1} \left(\zeta R_2^2(0) + R_0^2(\tau)\right)^{\frac{n}{2}} + \widetilde{C_2} \left(\zeta R_2^2(0) + R_0^2(\tau)\right)^{-\frac{n}{2}}.
\end{equation}

\subsection{Interface Conditions}
Recall that the inner interface in the $\zeta$-coordinate system is located at $\zeta = \zeta_1 := R_1^2(0)/R_2^2(0)$.
Let the disturbance of this interface be of the form $\eta_i = C_n(\tau) e^{in\alpha}$. The linearized kinematic condition at the inner interface is given by
\begin{equation}\label{VarInnerKin}
 C_n'(\tau) = f(\zeta_1,\tau).
\end{equation}
The outer interface is located at $\zeta = 1$. If the disturbance of the interface is of the form $\eta_o = D_n(\tau) e^{in\alpha}$, then the linearized kinematic condition is
\begin{equation}\label{VarOuterKin}
 D_n'(\tau) = f(1,\tau).
\end{equation}

The linearized dynamic interface condition at the inner interface is given by
\begin{equation*}
 \frac{2 R_1^2(\tau)}{R_2^2(0)} \Big( \tilde{p}^+(\zeta_1) - \tilde{p}^-(\zeta_1) \Big) - \eta_i \frac{Q}{2 \pi} \Big(\mu(\zeta_1) - \mu_i \Big)  =  T_1 \left( \frac{\eta_i + \frac{\partial^2 \eta_i}{\partial \alpha^2}}{R_1(\tau)}\right),
\end{equation*}
where $T_1$ is the interfacial tension.
Using the ansatz \eqref{VariableViscosity:normalmodes} and the system \eqref{VariableViscosity:lindist},
\begin{equation}\label{VariableViscosity:innerBC1}
 \frac{2 R_1^2(\tau)}{R_2^2(0)} \Big(\mu_i (f^-)'(\zeta_1,\tau) - \mu(\zeta_1) (f^+)'(\zeta_1,\tau) \Big) = \left\{ \frac{Qn^2}{2 \pi R_1^2(\tau)} \Big(\mu(\zeta_1) - \mu_i \Big)  -  T_1 \frac{n^4 - n^2}{R_1^3(\tau)}\right\} C_n(\tau).
\end{equation}
When $\zeta < \zeta_1$, $f$ is of the form given by \eqref{VariableViscosity:constGenSoln}. When $\tau = 0$, in order to avoid 
a singularity when $\zeta \to 0$, $f$ must be of the form $f(\zeta,\tau) = \widetilde{C_1} (\zeta R_2^2(0)+ R_0^2(\tau))^{\frac{n}{2}}.$
We assume this also to be true for $\tau > 0$. Then
\begin{equation}\label{VariableViscosity:innerBC2}
 (f^-)'(\zeta_1,\tau) = \frac{n R_2^2(0)}{2 R_1^2(\tau)} f(\zeta_1,\tau).
\end{equation}
Using \eqref{VariableViscosity:innerBC2} in \eqref{VariableViscosity:innerBC1},
\begin{equation}\label{VariableViscosity:innerBC3}
  n \mu_i f(\zeta_1,\tau) - \frac{2 R_1^2(\tau)}{R_2^2(0)} \mu(\zeta_1) (f^+)'(\zeta_1,\tau) = \left\{ \frac{Qn^2}{2 \pi R_1^2(\tau)} \Big(\mu(\zeta_1) - \mu_i \Big)  -  T_1 \frac{n^4 - n^2}{R_1^3(\tau)}\right\} C_n(\tau).
\end{equation}
Combining this with the kinematic interface condition \eqref{VarInnerKin},
\begin{equation}\label{VarInnerBC}
  C_n'(\tau) = \frac{2 R_1^2(\tau)}{n R_2^2(0)} \frac{\mu(\zeta_1)}{\mu_i} (f^+)'(\zeta_1,\tau) + \frac{F_1}{\mu_i}  C_n(\tau),
\end{equation}
where $F_1$ is given by
\begin{equation}\label{F1}
 F_1 = \frac{Qn}{2 \pi R_1^2(\tau)} \Big(\mu(\zeta_1) - \mu_i \Big)  -  T_1 \frac{n^3 - n}{R_1^3(\tau)}.
\end{equation}
A similar procedure for the outer interface yields the interface condition
\begin{equation}\label{VarOuterBC}
  D_n'(\tau) = - \frac{2 R_2^2(\tau)}{n R_2^2(0)} \frac{\mu(1)}{\mu_o} (f^-)'(1,\tau) + \frac{F_2}{\mu_o} D_n(\tau),
\end{equation}
where $F_2$ is given by
\begin{equation}\label{F2}
 F_2 = \frac{Qn}{2 \pi R_2^2(\tau)} \Big(\mu_o - \mu(1) \Big)  -  T_2 \frac{n^3 - n}{R_2^3(\tau)},
\end{equation}
and $T_2$ is the interfacial tension at the outer interface.

\subsection{Eigenvalue problem}
To this point we have the following system of equations where the field equations hold in the domain $(\zeta_1, 1)$:
\begin{equation}\label{system1}
\left.
\begin{array}{l l}
 \frac{\partial \phi(\zeta,\tau)}{\partial \tau} = - \frac{d \mu}{d \zeta} f(\zeta,\tau), \\
\frac{\partial}{\partial \zeta} \left\{ \left( \zeta R_2^2(0) + R_0^2(\tau) \right) \mu \frac{\partial f(\zeta,\tau)}{\partial \zeta} \right\} -\frac{n^2R_2^4(0)}{4(\zeta R_2^2(0) + R_0^2(\tau))} \mu f(\zeta,\tau)  = \frac{Qn^2R_2^2(0)\phi(\zeta,\tau)}{4 \pi (\zeta R_2^2(0) + R_0^2(\tau))} , & \\ 
  C_n'(\tau) = \frac{2 R_1^2(\tau)}{n R_2^2(0)} \frac{\mu(\zeta_1)}{\mu_i} f'(\zeta_1) + \frac{F_1}{\mu_i}  C_n(\tau), & \\
  D_n'(\tau) = - \frac{2 R_2^2(\tau)}{n R_2^2(0)} \frac{\mu(1)}{\mu_o} f'(1) + \frac{F_2}{\mu_o} D_n(\tau), &
\end{array} \right\}
\end{equation}
where we have dropped the superscripts ``+'' and ``-''. Using a quasi-steady-state approximation (QSSA) in which $\tau$ (and hence $R(\tau)$) are frozen so that the functions $\phi(x,\tau)$, $C_n(\tau)$, and $D_n(\tau)$ experience short-time exponential growth satisfying
\begin{equation}\label{growth}
\left.
\begin{array}{l}
\frac{\partial \phi(\zeta,\tau)}{\partial \tau} = \sigma(\tau) \phi(\zeta,\tau), \\
C_n'(\tau) = \sigma(\tau) C_n(\tau), \\
D_n'(\tau) = \sigma(\tau) D_n(\tau),
\end{array} \right\}
\end{equation}
for some growth rate $\sigma(\tau)$. Plugging \eqref{growth} into \eqref{system1} and using \eqref{VarInnerKin} and \eqref{VarOuterKin}, $(f,\sigma)$ is a solution to the following eigenvalue problem in the domain $(\zeta_1, 1)$:
\begin{equation}\label{Eigproblem}
\left.
\begin{array}{l l}
 \Big(\left(\zeta R_2^2(0) + R_0^2(\tau) \right) \mu f'(\zeta) \Big)' - \frac{n^2 R_2^4(0)}{4(\zeta R_2^2(0) + R_0^2(\tau))} \mu f(\zeta) = - \frac{Q n^2 R_2^2(0)}{4 \pi (\zeta R_2^2(0) + R_0^2(\tau))} \frac{1}{\sigma(\tau)} \frac{d \mu}{d \zeta} f(\zeta), \\ % \qquad &\zeta_1 < \zeta < 1, \\
  \frac{2 R_1^2(\tau)}{n R_2^2(0)} \mu(\zeta_1) f'(\zeta_1) = \left(\mu_i - \frac{F_1}{\sigma(\tau)} \right) f(\zeta_1), & \\
 -\frac{2 R_2^2(\tau)}{n R_2^2(0)} \mu(1) f'(1) = \left(\mu_o - \frac{F_2}{\sigma(\tau)} \right) f(1). &
\end{array} \right\}
\end{equation}
The eigenvalues of system \eqref{Eigproblem} are the time-dependent growth rates of the disturbances of the system. The QSSA allows for the considerable analysis and computation of growth rates that follows.

\section{Relating the growth of interfacial disturbances in the $\zeta$-coordinates with the physical coordinates}\label{sec:gr}
We now relate the growth of the interfacial disturbances in the $\zeta$-coordinates to the same in the physical coordinate system. 
We start with the inner interface. Recall that in the transformed coordinates, the inner interface was disturbed by 
$C_n(\tau) e^{in\alpha}$. Therefore, it is located at $\zeta = \zeta_1 + C_n(\tau) e^{in\alpha}$. Thus, the position of the interface in the physical coordinates is 
\begin{equation*}
 r = \sqrt{\zeta R_2^2(0) + R_0^2(t)} = \sqrt{(\zeta_1 + C_n(\tau) e^{in\alpha}) R_2^2(0) + R_0^2(t)}.
\end{equation*}
Expanding about $\zeta = \zeta_1$,
\begin{equation*}
 r = R_1(\tau) + \frac{R_2^2(0)}{2 R_1(\tau)} C_n(\tau) e^{in\alpha} + \mathcal{O}(C_n^2(\tau)).
\end{equation*}
If we write the disturbance in the physical coordinates as $A_n(t) e^{in\theta}$ (that is, the interface is located at $r = R_1(t) + A_n(t) e^{in\theta}$), then, within linear approximation,
\begin{equation}\label{InterfaceCompare}
 A_n(t) = \frac{R_2^2(0)}{2 R_1(\tau)} C_n(\tau).
\end{equation}
This implies that
\begin{equation}\label{Compare_inner}
 \frac{A_n'(t)}{A_n(t)} =  \frac{C_n'(\tau)}{C_n(\tau)} -\frac{Q}{2\pi R_1^2(t)}.
\end{equation}
Following the same process, the growth of the outer interface is 
\begin{equation}\label{Compare_outer}
 \frac{B_n'(t)}{B_n(t)} =  \frac{D_n'(\tau)}{D_n(\tau)} -\frac{Q}{2\pi R_2^2(t)},
\end{equation}
where the outer interface is located at $r = R_2(t) + B_n(t) e^{in\theta}$.

\section{Constant Viscosity Fluids}\label{sec:constant}
We now consider the case in which all of the fluids have constant viscosity, first for two-layer flows and then for three-layer. We do this to demonstrate that the variable viscosity formulation can recover previous results in the constant viscosity limit. Through this process, we also present some new results on three-layer constant viscosity flows.

When there are only two fluids (i.e. one interface located at $r = R(t)$), the above analysis holds with the coordinate transformation
\begin{equation*}
 \zeta = \frac{r^2 - R_0^2(t)}{R^2(0)}.
\end{equation*}
In the new coordinates, the basic solution has the interface fixed at $\zeta = 1$.
Let $\mu_i$ denote the viscosity of the inner fluid and $\mu_o$ denote the viscosity of the outer fluid. Analogous to equations \eqref{VarOuterKin} and \eqref{VariableViscosity:innerBC1}, the kinematic interface condition is
\begin{equation}\label{VariableViscosity:twolayer_kinBC}
 D_n'(\tau) = f(1,\tau),
\end{equation}
and the dynamic interface condition is
\begin{equation}\label{VariableViscosity:twolayerdynBC}
 \frac{2 R^2(\tau)}{R^2(0)} \Big(-\mu_o (f^+)'(1,\tau) + \mu_i (f^-)'(1,\tau) \Big) = \left\{ \frac{Qn^2}{2 \pi R^2(\tau)} \Big(\mu_o - \mu_i \Big)  -  T \frac{n^4 - n^2}{R^3(\tau)}\right\} D_n(\tau),
\end{equation}
where $T$ is the interfacial tension and $D_n(\tau)$ is the amplitude of the disturbance of the interface with wave number $n$.
Also, as stated in the derivation of the interface conditions above,
\begin{equation*}
 f(\zeta,\tau) = \widetilde{C_1} \Big(\zeta R^2(0) + R_0^2(\tau)\Big)^{\frac{n}{2}}, \qquad \zeta < 1,
\end{equation*}
and
\begin{equation*}
 f(\zeta,\tau) = \widetilde{C_2} \Big(\zeta R^2(0) + R_0^2(\tau)\Big)^{-\frac{n}{2}}, \qquad \zeta > 1.
\end{equation*}
Using these in equations \eqref{VariableViscosity:twolayer_kinBC} and \eqref{VariableViscosity:twolayerdynBC} gives the two-layer growth rate
\begin{equation}\label{VariableViscosity:2layerGR}
 \sigma := \frac{D_n'(\tau)}{D_n(\tau)}   = \frac{Qn}{2 \pi R^2(\tau)} \frac{\mu_o - \mu_i}{\mu_o + \mu_i} - \frac{T}{\mu_i + \mu_o} \frac{n^3 - n}{R^3(\tau)}.
\end{equation}
This is an expression for the growth rate of the disturbance of the interface in the $\zeta$-coordinate system. This problem can be solved in the original $r$-coordinate system, 
and the result is a classic one \citep{Paterson:1981}. We recall this result, which has been reproduced using our current notation by \cite{gin-daripa:hs-rad}. If $B_n(t)$ is the amplitude of the disturbance with wave number $n$, the growth rate is
\begin{equation}\label{VariableViscosity:2layer4}
  \frac{B_n'(t)}{B_n(t)}  = \frac{Qn}{2 \pi R^2(t)} \frac{\mu_o - \mu_i}{\mu_o + \mu_i} - \frac{T}{\mu_o + \mu_i} \frac{n^3 - n}{R^3(t)} - \frac{Q}{2 \pi R^2(t)}.
\end{equation}
The relationship between equations \eqref{VariableViscosity:2layerGR} and \eqref{VariableViscosity:2layer4} is consistent with the comparison of the growth rates in the two different coordinate systems given by equation \eqref{Compare_outer}.

We now turn to three-layer flows in which the fluid in the middle layer also has constant viscosity, $\mu_1$. This situation has been investigated by \cite{Beeson-Jones/Woods:2015} and \cite{gin-daripa:constant_interfaces}, and it has been found that the magnitudes of interfacial disturbances $A_n(t)$ and $B_n(t)$ are governed by the following system of ODE's
\begin{equation}\label{ODE}
 \frac{d}{dt}
\begin{pmatrix}
 A_n(t) \\
 B_n(t) 
\end{pmatrix} = \mathbf{M}_1^r(t) 
\begin{pmatrix}
 A_n(t) \\
 B_n(t) 
\end{pmatrix},
\end{equation}
where $\mathbf{M}_1^r(t)$ is the $2 \times 2$ matrix with entries given by
\begin{equation}\label{M_3Layer}
\begin{split}
 \Big(\mathbf{M}_1^r(t)\Big)_{11} &=  \frac{\left\{(\mu_o + \mu_1) - (\mu_o - \mu_1) \left(\frac{R_1}{R_2}\right)^{2} \right\}  F_1 }{(\mu_1 - \mu_i) (\mu_o - \mu_1) \left(\frac{R_1}{R_2}\right)^{2} + (\mu_1 + \mu_i) (\mu_o + \mu_1)} - \frac{Q}{2 \pi R_1^2}, \\
 \Big(\mathbf{M}_1^r(t)\Big)_{12} &=  \frac{2 \mu_1 \left(\frac{R_1}{R_2}\right)^{n-1} F_2}{(\mu_1 - \mu_i) (\mu_o - \mu_1) \left(\frac{R_1}{R_2}\right)^{2} + (\mu_1 + \mu_i) (\mu_o + \mu_1)}, \\
 \Big(\mathbf{M}_1^r(t)\Big)_{21} &= \frac{2 \mu_1 \left(\frac{R_1}{R_2}\right)^{n+1} F_1}{(\mu_1 - \mu_i) (\mu_o - \mu_1) \left(\frac{R_1}{R_2}\right)^{2} + (\mu_1 + \mu_i) (\mu_o + \mu_1)}, \\
 \Big(\mathbf{M}_1^r(t)\Big)_{22} &=  \frac{\left\{(\mu_1 + \mu_i) + (\mu_1 - \mu_i) \left(\frac{R_1}{R_2}\right)^{2} \right\}  F_2 }{(\mu_1 - \mu_i) (\mu_o - \mu_1) \left(\frac{R_1}{R_2}\right)^{2} + (\mu_1 + \mu_i) (\mu_o + \mu_1)} - \frac{Q}{2 \pi R_2^2}.
\end{split}
\end{equation}
We recall equation \eqref{InterfaceCompare} which compares the interfacial disturbance of the inner interface in the $r$-coordinates and the $\zeta$-coordinates, and also consider the corresponding equation for the outer interface:
\begin{equation}\label{InterfaceCompare2}
A_n(t) = \frac{R_2^2(0)}{2 R_1(\tau)} C_n(\tau), \qquad B_n(t) = \frac{R_2^2(0)}{2 R_2(\tau)} D_n(\tau).
\end{equation}
%Taking derivatives of \eqref{InterfaceCompare2} implies
%\begin{equation}\label{InterfaceCompare3}
%\begin{split}
% A_n'(t)= \frac{R_2^2(0)}{2 R_1(\tau)} \left[ C_n'(\tau) - \frac{Q}{2 \pi R_1^2(\tau)} C_n(\tau) \right], \\
% B_n'(t)= \frac{R_2^2(0)}{2 R_2(\tau)} \left[ D_n'(\tau) - \frac{Q}{2 \pi R_2^2(\tau)} D_n(\tau) \right].
% \end{split}
%\end{equation}
Equations \eqref{ODE} and \eqref{InterfaceCompare2} gives the matrix equation
\begin{equation}\label{Mr_to_Mzeta1}
  \frac{d}{dt}
\begin{pmatrix}
 A_n(t) \\
 B_n(t) 
\end{pmatrix} =
\mathbf{M}_1^r(t) \frac{R_2^2(0)}{2} \mathbf{R}^{-1}
\begin{pmatrix}
 C_n(\tau) \\
 D_n(\tau) 
\end{pmatrix},
\end{equation}
where
\begin{equation}
 \mathbf{R} = \begin{pmatrix}
 R_1 & 0 \\
0 & R_2 
\end{pmatrix}.
\end{equation}
%Equation \eqref{InterfaceCompare3} can be written as
Taking derivatives of \eqref{InterfaceCompare2} and rewriting the resulting equations in matrix form, we obtain
\begin{equation}\label{Mr_to_Mzeta2}
  \frac{d}{dt}
\begin{pmatrix}
 A_n(t) \\
 B_n(t) 
\end{pmatrix} = \frac{R_2^2(0)}{2} \mathbf{R}^{-1} \frac{d}{d \tau}
\begin{pmatrix}
 C_n(\tau) \\
 D_n(\tau) 
\end{pmatrix} - \frac{R_2^2(0)}{2} \mathbf{R}^{-1} \mathbf{Q}
\begin{pmatrix}
 C_n(\tau) \\
 D_n(\tau) 
\end{pmatrix},
\end{equation}
where
\begin{equation}
 \mathbf{Q} = \begin{pmatrix}
 \frac{Q}{2 \pi R_1^2} & 0 \\
0 & \frac{Q}{2 \pi R_2^2}
\end{pmatrix}.
\end{equation}
Combining \eqref{Mr_to_Mzeta1} and \eqref{Mr_to_Mzeta2},
\begin{equation}\label{Mr_to_Mzeta}
  \frac{d}{d\tau}
\begin{pmatrix}
 C_n(\tau) \\
 D_n(\tau) 
\end{pmatrix} = \mathbf{M}_1^{\zeta}(\tau)
\begin{pmatrix}
 C_n(\tau) \\
 D_n(\tau) 
\end{pmatrix},
\end{equation}
where $\mathbf{M}_1^{\zeta} = \mathbf{R} \mathbf{M}_1^r \mathbf{R}^{-1} + \mathbf{Q}$. The entries of $\mathbf{M}_1^{\zeta}(\tau)$ are
\begin{equation}\label{M_zeta}
\begin{split}
 \Big(\mathbf{M}_1^{\zeta}(\tau)\Big)_{11} &=  \frac{\left\{(\mu_o + \mu_1) - (\mu_o - \mu_1) \left(\frac{R_1}{R_2}\right)^{2} \right\}  F_1 }{(\mu_1 - \mu_i) (\mu_o - \mu_1) \left(\frac{R_1}{R_2}\right)^{2} + (\mu_1 + \mu_i) (\mu_o + \mu_1)}, \\
\Big(\mathbf{M}_1^{\zeta}(\tau)\Big)_{12} &=  \frac{2 \mu_1 \left(\frac{R_1}{R_2}\right)^{n} F_2}{(\mu_1 - \mu_i) (\mu_o - \mu_1) \left(\frac{R_1}{R_2}\right)^{2} + (\mu_1 + \mu_i) (\mu_o + \mu_1)}, \\
 \Big(\mathbf{M}_1^{\zeta}(\tau)\Big)_{21} &= \frac{2 \mu_1 \left(\frac{R_1}{R_2}\right)^{n} F_1}{(\mu_1 - \mu_i) (\mu_o - \mu_1) \left(\frac{R_1}{R_2}\right)^{2} + (\mu_1 + \mu_i) (\mu_o + \mu_1)}, \\
 \Big(\mathbf{M}_1^{\zeta}(\tau)\Big)_{22} &=  \frac{\left\{(\mu_1 + \mu_i) + (\mu_1 - \mu_i) \left(\frac{R_1}{R_2}\right)^{2} \right\}  F_2 }{(\mu_1 - \mu_i) (\mu_o - \mu_1) \left(\frac{R_1}{R_2}\right)^{2} + (\mu_1 + \mu_i) (\mu_o + \mu_1)}.
\end{split}
\end{equation}
There are several important things to note from the relationship between the matrices $\mathbf{M}_1^{\zeta}$ and $\mathbf{M}_1^r$:

\begin{enumerate}
 
\item It was demonstrated by \cite{gin-daripa:constant_interfaces} that $\mathbf{M}_1^r$ can have complex eigenvalues.  However, it is shown in the next section (section \ref{sec:UpperBounds}) that the problem in the $\zeta$-coordinates has real growth rates. This analysis holds even for constant viscosity. Therefore, $\mathbf{M}_1^{\zeta}$ has real eigenvalues.

\item  $\mathbf{M}_1^{\zeta}$ can also be expressed as $\mathbf{M}_1^{\zeta} = \mathbf{R} \left(\mathbf{M}_1^r  + \mathbf{Q} \right) \mathbf{R}^{-1}$. Since a similarity transformation does not change eigenvalues, the eigenvalues of $\mathbf{M}_1^{\zeta}$ are the eigenvalues of $\mathbf{M}_1^r + \mathbf{Q}$ where $\mathbf{Q}$ is a diagonal matrix. Thus it is this diagonal matrix $\mathbf{Q}$ which, when added to $\mathbf{M}_1^r$, converts the complex eigenvalues to real and leaves real eigenvalues real.

\item Both $\mathbf{M}_1^r$ and $\mathbf{M}_1^{\zeta}$ have real eigenvalues when $F_1$ and $F_2$ defined respectively in \eqref{F1} and \eqref{F2} have the same sign. Define the matrices:
\begin{equation}\label{C}
\mathbf{E} = 
\begin{pmatrix}
R_1 \sqrt{|F_1|} & 0 \\
 0 & R_2 \sqrt{|F_2|} 
\end{pmatrix}, \qquad
\mathbf{F} = 
\begin{pmatrix}
\sqrt{|F_1|} & 0 \\
 0 & \sqrt{|F_2|} 
\end{pmatrix}
\end{equation}
then $\mathbf{E} \mathbf{M}_1^r \mathbf{E}^{-1}$ and $\mathbf{F} \mathbf{M}_1^{\zeta} \mathbf{F}^{-1}$ are real symmetric matrices. Therefore, they are similar to real symmetric (i.e. self-adjoint) matrices and have real eigenvalues. However, the argument breaks down when $F_1$ and $F_2$ have opposite signs because $\mathbf{E} \mathbf{M}_1^r \mathbf{E}^{-1}$ and $\mathbf{F} \mathbf{M}_1^{\zeta} \mathbf{F}^{-1}$ are not symmetric. This shows that the complex eigenvalues of $\mathbf{M}_1^r$ can only occur when $F_1 F_2 \leq 0$.
\end{enumerate}

\section{Upper Bounds}\label{sec:UpperBounds}
To derive an upper bound on the growth rate, we take an inner product of $\eqref{Eigproblem}_1$ with $f$.
Using integration by parts along with the boundary conditions $\eqref{Eigproblem}_2$ and $\eqref{Eigproblem}_3$ and solving for $\sigma$ yields
\begin{equation}\label{VariableViscosity:RayQuotient}
 \sigma = \frac{n F_1 |f(\zeta_1)|^2 + n F_2 |f(1)|^2 + \frac{Qn^2}{2 \pi} I_1}{ n \mu_i  |f(\zeta_1)|^2 + n \mu_o |f(1)|^2 + \frac{2}{R_2^2(0)} I_2 +\frac{n^2 R_2^2(0)}{2} I_3},
\end{equation}
where
\begin{align}
 I_1 &= \int_{\zeta_1}^1 \frac{\mu'(\zeta)}{\zeta R_2^2(0) + R_0^2(\tau)} |f(\zeta)|^2 d\zeta, \label{VariableViscosity:I1} \\
 I_2 &= \int_{\zeta_1}^{1} \left(\zeta R_2^2(0) + R_0^2(\tau) \right) \mu(\zeta) |f'(\zeta)|^2 d\zeta, \label{VariableViscosity:I2}\\
 I_3 &= \int_{\zeta_1}^1 \frac{\mu(\zeta)}{\zeta R_2^2(0) + R_0^2(\tau)} |f(\zeta)|^2 d\zeta. \label{VariableViscosity:I3}
\end{align}

Note that all terms in \eqref{VariableViscosity:RayQuotient} are real. Therefore, \textbf{$\sigma$ is real for all wave numbers.} This is a product of the change of variables from the $r$-coordinates to the $\zeta$-coordinates. It is shown by \cite{gin-daripa:hs-rad} that the growth rate can be complex for constant viscosity flows in the $r$-coordinates. 

When $\sigma > 0$, we may ignore the 
positive term containing $I_2$ in the denominator and get
\begin{equation*}
 \sigma < \frac{n F_1 |f(\zeta_1)|^2 + n F_2 |f(1)|^2 + \frac{Qn^2}{2 \pi} I_1}{ n \mu_i  |f(\zeta_1)|^2 + n \mu_o |f(1)|^2 +\frac{n^2 R_2^2(0)}{2} I_3}.
\end{equation*}
We use the following inequality
\begin{equation*}
 \frac{\sum\limits_{i=1}^{N} A_ix_i}{\sum\limits_{i=1}^{N} B_ix_i} \leq \max_i \left\{\frac{A_i}{B_i}\right\},
\end{equation*}
which holds for any $N$ if $A_i > 0$, $B_i > 0$, and $X_i > 0$ for all
$i = 1,...,N$. By using this inequality with $N = 3$,
\begin{equation*}
 \sigma < \max \left\{\frac{F_1}{\mu_i}, \frac{F_2}{\mu_o}, \frac{Q}{\pi R_2^2(0)} \frac{I_1}{I_3}  \right\}.
\end{equation*}
But
\begin{align*}
 \frac{I_1}{I_3} <  \frac{{\displaystyle \sup_{\zeta \in (\zeta_1,1)}} \mu'(\zeta) }{{\displaystyle\inf_{\zeta \in (\zeta_1,1)}} \mu(\zeta)}  < \frac{{\displaystyle\sup_{\zeta \in (\zeta_1,1)}} \mu'(\zeta)}{\mu_i}.
\end{align*}
Therefore,
\begin{equation}\label{VariableViscosity:UpperBound}
 \sigma < \max \left\{\frac{F_1}{\mu_i}, \frac{F_2}{\mu_o}, \frac{Q}{\pi R_2^2(0)} \frac{1}{\mu_i} \sup_{\zeta \in (\zeta_1,1)} \mu'(\zeta)  \right\}.
\end{equation}
Using the definitions of $F_1$ and $F_2$ given by \eqref{F1} and \eqref{F2}, 
\begin{equation}\label{VariableViscosity:UpperBound2}
\begin{split}
 \sigma < \max & \left\{ \frac{Qn}{2 \pi R_1^2(\tau)}\left(\frac{\mu(\zeta_1) - \mu_i}{\mu_i}\right) - \frac{T_1}{\mu_i} \frac{n^3-n}{R_1^3(\tau)}, \right. \\
&  \left. \frac{Qn}{2 \pi R_2^2(\tau)}\left(\frac{\mu_o - \mu(1)}{\mu_o}\right) - \frac{T_2}{\mu_o} \frac{n^3-n}{R_2^3(\tau)}, \frac{Q}{\pi R_2^2(0)} \frac{1}{\mu_i} \sup_{\zeta \in (\zeta_1,1)} \mu'(\zeta)  \right\},
\end{split}
\end{equation}
which is the modal upper bound for a wave with wave number $n$. We can find an absolute upper bound for all wave numbers by taking the maximum of the first two terms over all values of $n$. The absolute upper bound is
\begin{equation}\label{VariableViscosity:AbsUpperBound}
\begin{split}
 \sigma < \max & \left\{\frac{2 T_1}{\mu_i R_1^3(\tau)} \left(\frac{Q R_1(\tau)}{6 \pi T_1} (\mu(\zeta_1) - \mu_i) + \frac{1}{3}\right)^{\frac{3}{2}}, \right. \\ 
 & \left. \frac{2 T_2}{\mu_o R_2^3(\tau)} \left(\frac{Q R_2(\tau)}{6 \pi T_2} (\mu_o - \mu(1)) + \frac{1}{3}\right)^{\frac{3}{2}}, \frac{Q}{\pi R_2^2(0)} \frac{1}{\mu_i} \sup_{\zeta \in (\zeta_1,1)} \mu'(\zeta)  \right\}.
\end{split}
\end{equation}

\section{Characterization of the Eigenvalues and Eigenfunctions}\label{sec:Char}
Using $\lambda = 1/\sigma$, the eigenvalue problem \eqref{Eigproblem} can be written as
\begin{equation}\label{VariableViscosity:EigLambda}
\left.
\begin{array}{l l}
 \Big(\left(\zeta R_2^2(0) + R_0^2(\tau) \right) \mu f'(\zeta) \Big)' - \left( \frac{n^2 R_2^4(0)}{4(\zeta R_2^2(0) + R_0^2(\tau))} \mu  -\frac{Q n^2 R_2^2(0)}{4 \pi (\zeta R_2^2(0) + R_0^2(\tau))} \mu' \lambda  \right)  f(\zeta) =  0, \\ % \qquad &\zeta_1 < \zeta < 1, \\
 \left(\mu_i - \lambda F_1 \right) f(\zeta_1)  - \frac{2 R_1^2(\tau)}{Rn_2^2(0)} \mu(\zeta_1) f'(\zeta_1) = 0, & \\
 \left(\mu_o - \lambda F_2 \right) f(1) + \frac{2 R_2^2(\tau)}{Rn_2^2(0)} \mu(1) f'(1) = 0. &
\end{array} \right\}
\end{equation}
Note that $F_1$ and $F_2$ are positive for small values of $n$ and negative for large values of $n$ (see equations \eqref{F1} and \eqref{F2}). From the upper bound \eqref{VariableViscosity:UpperBound}, we can see that as long as the viscous gradient $\mu'(\zeta)$ is not too large, the maximum value of $\sigma$ will occur when $F_1$ and $F_2$ are positive.
For this range of wave numbers, we have the following characterization of the eigenvalues and eigenfunctions.

\begin{theorem}\label{VariableViscosity:Thm1}
Let $F_1$, $F_2$, $Q$, $n$, $\mu_i$, $\mu_o > 0$. Let $\mu(\zeta)$ be a positive, strictly increasing function in $C^1([\zeta_1,1])$. 
Then the eigenvalue problem \eqref{VariableViscosity:EigLambda} has a countably infinite number of real eigenvalues that can be ordered
\begin{equation*}
  0 < \lambda_0 < \lambda_1 < \lambda_2 < ...
\end{equation*}
with the property that for the corresponding eigenfunctions, $\left\{f_i\right\}_{i=0}^{\infty}$, $f_i$ has exactly $i$ zeros in the interval $(\zeta_1,1)$. Additionally, the eigenfunctions are continuous with 
a continuous derivative.
\end{theorem}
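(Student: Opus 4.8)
The plan is to recast $\eqref{VariableViscosity:EigLambda}_1$ as a regular Sturm--Liouville equation and then run an oscillation/shooting argument adapted to the $\lambda$-dependent boundary conditions. Writing $p(\zeta) = (\zeta R_2^2(0) + R_0^2(\tau))\mu(\zeta)$, $q(\zeta) = \tfrac{n^2 R_2^4(0)}{4(\zeta R_2^2(0)+R_0^2(\tau))}\mu(\zeta)$ and $w(\zeta) = \tfrac{Qn^2 R_2^2(0)}{4\pi(\zeta R_2^2(0)+R_0^2(\tau))}\mu'(\zeta)$, the field equation becomes $(pf')' - (q - \lambda w)f = 0$ with $p, w > 0$ and $q \ge 0$ on $[\zeta_1,1]$; here positivity of $w$ is exactly where the hypothesis that $\mu$ is strictly increasing enters, together with $\int_{\zeta_1}^1 w\,d\zeta > 0$. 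I would first record the identities $\zeta_1 R_2^2(0) + R_0^2(\tau) = R_1^2(\tau)$ and $R_2^2(0) + R_0^2(\tau) = R_2^2(\tau)$, so that $p(\zeta_1) = R_1^2(\tau)\mu(\zeta_1)$ and $p(1) = R_2^2(\tau)\mu(1)$; this lets the two boundary conditions be written purely in terms of $f$ and $pf'$ at the endpoints. Positivity of the eigenvalues is then immediate from the Rayleigh quotient \eqref{VariableViscosity:RayQuotient} of Section~\ref{sec:UpperBounds}: under the hypotheses $F_1,F_2,\mu_i,\mu_o>0$ and $\mu'>0$ the quantity $I_1$ of \eqref{VariableViscosity:I1} is strictly positive, so numerator and denominator are both strictly positive and $\sigma = 1/\lambda > 0$; hence every eigenvalue satisfies $\lambda>0$ and none lies in $\{\lambda \le 0\}$.

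Next I would introduce the Pr\"ufer variables $f = \rho\sin\theta$, $pf' = \rho\cos\theta$, for which the field equation gives
\[
\theta'(\zeta) = \frac{\cos^2\theta}{p(\zeta)} + \big(\lambda w(\zeta) - q(\zeta)\big)\sin^2\theta .
\]
The two structural facts I need are: (i) whenever $\theta \in \pi\mathbb{Z}$ one has $\theta' = 1/p > 0$, so $\theta$ crosses each integer multiple of $\pi$ exactly once and upward, whence the number of zeros of $f$ in $(\zeta_1,1)$ equals the number of such multiples strictly between $\theta(\zeta_1)$ and $\theta(1)$; and (ii) since $\partial\theta'/\partial\lambda = w\sin^2\theta \ge 0$ with $w>0$, both the standard ODE comparison theorem and monotone dependence on initial data apply. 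The boundary conditions translate into prescribed endpoint angles: using $p(\zeta_1) = R_1^2(\tau)\mu(\zeta_1)$ and $p(1) = R_2^2(\tau)\mu(1)$, $\eqref{VariableViscosity:EigLambda}_2$ and $\eqref{VariableViscosity:EigLambda}_3$ become $\tan\theta(\zeta_1) = \tfrac{2}{nR_2^2(0)(\mu_i - \lambda F_1)}$ and $\tan\theta(1) = -\tfrac{2}{nR_2^2(0)(\mu_o - \lambda F_2)}$. Choosing the continuous branches $\theta_1(\lambda),\theta_2(\lambda)\in(0,\pi)$, a direct computation of $d/d\lambda$ (this is where $F_1,F_2>0$ is essential) shows $\theta_1(\lambda)$ is strictly increasing and $\theta_2(\lambda)$ strictly decreasing on $(0,\infty)$.

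The eigenvalues are then located via the shooting function $\Phi(\lambda) := \theta(1;\lambda) - \theta_2(\lambda)$, where $\theta(\,\cdot\,;\lambda)$ solves the Pr\"ufer equation with $\theta(\zeta_1) = \theta_1(\lambda)$; precisely, $\lambda$ is an eigenvalue iff $\Phi(\lambda) \in \pi\mathbb{Z}$. I would establish that $\Phi$ is continuous and strictly increasing: the terminal angle $\theta(1;\lambda)$ is strictly increasing in $\lambda$ by fact (ii) (the ODE coefficient increases and the initial angle $\theta_1(\lambda)$ increases, and trajectories cannot cross), while $-\theta_2(\lambda)$ is strictly increasing by the branch computation, so all contributions reinforce. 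For the range of $\Phi$, as $\lambda\to 0^+$ the equation reduces to $(pf')'=qf$ with $q\ge0$, which is disconjugate, and the left condition forces $f,f'>0$ at $\zeta_1$, so $f$ has no interior zero and $\theta(1;0^+)<\pi$; combined with $\theta_2(0^+)\to\pi^-$ this gives $\Phi(0^+)<0$. As $\lambda\to+\infty$, Sturm comparison against a constant-coefficient oscillator on a subinterval where $w>0$ forces the interior crossing count, hence $\theta(1;\lambda)$, to tend to $+\infty$, so $\Phi(\lambda)\to+\infty$. By strict monotonicity and the intermediate value theorem, $\Phi$ attains each value $k\pi$, $k=0,1,2,\dots$, at a unique $\lambda_k>0$, yielding $0<\lambda_0<\lambda_1<\cdots$; at $\lambda_k$ the interval $(\theta_1,\theta_2+k\pi)$ contains exactly the multiples $\pi,2\pi,\dots,k\pi$, so by fact (i) the eigenfunction $f_k$ has exactly $k$ zeros in $(\zeta_1,1)$. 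Regularity is routine: $(pf')' = (q-\lambda w)f$ is continuous, so $pf'\in C^1$, and since $p\in C^1$ with $p>0$ we get $f' = (pf')/p\in C^1$ and $f\in C^2$, in particular $f,f'$ continuous.

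The main obstacle is the $\lambda$-dependence of the boundary conditions, which is absent from textbook Sturm--Liouville theory: one must show that the motion of the endpoint angles $\theta_1(\lambda),\theta_2(\lambda)$ does not spoil the monotonicity of the shooting function. This is precisely where the sign hypotheses $F_1,F_2>0$ do the work --- they make the left endpoint angle increase and the right endpoint angle decrease with $\lambda$, so that the boundary and interior monotonicities cooperate rather than compete; if $F_1F_2<0$ this cooperation fails and the clean interlacing of eigenvalues and nodal counts can break down. A secondary technical point is verifying $\Phi(0^+)<0$, i.e.\ that the ground state has no interior node, for which the disconjugacy of $(pf')'=qf$ together with the positivity of the endpoint data from the $\lambda=0$ boundary relation is the key input.
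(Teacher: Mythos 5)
Your proposal is correct in substance but takes a genuinely different route from the paper. The paper's proof is essentially a citation: it identifies \eqref{VariableViscosity:EigLambda} with the Sturmian boundary-value problem treated in Theorems I and II of Ince \cite[p. 232--233]{Ince:1956} --- classical results that already allow the eigenvalue parameter to enter both the equation and the boundary conditions, provided the relevant coefficients ($G(x,\lambda)$ there, $\alpha' = \mu_i - \lambda F_1$ and $\beta' = \mu_o - \lambda F_2$ here) depend monotonically on $\lambda$ --- together with Ince's existence theorem \cite[p. 73]{Ince:1956} for the regularity of eigenfunctions, and the Rayleigh quotient \eqref{VariableViscosity:RayQuotient} for positivity of $\sigma = 1/\lambda$. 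You instead re-derive the Sturmian oscillation theory from scratch via Pr\"ufer angles and a shooting function; this is, in effect, a proof of the cited theorems specialized to this problem, and your observation that $F_1, F_2 > 0$ makes the left endpoint angle increase and the right endpoint angle decrease in $\lambda$ is precisely the monotonicity structure under which Ince's theorems apply. What your approach buys is self-containedness and transparency about where each hypothesis enters (notably $\mu' > 0$ giving $w > 0$, hence oscillation as $\lambda \to \infty$ and strict monotonicity of the terminal angle); what the paper's approach buys is brevity. Both arguments use \eqref{VariableViscosity:RayQuotient}, with $I_1$ of \eqref{VariableViscosity:I1} positive, in the same way to get $\lambda > 0$.

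One step in your write-up is stated incorrectly, though it is harmlessly fixable. In the limit $\lambda \to 0^+$ you claim $\theta_2(0^+) \to \pi^-$; this is false, since $\cot\theta_2(0) = -\tfrac{n R_2^2(0)}{2}\mu_o$ is a fixed negative number, so $\theta_2(0)$ is merely some value in $(\pi/2,\pi)$, and the bound $\theta(1;0^+) < \pi$ alone would then not yield $\Phi(0^+) < 0$. The repair is the stronger conclusion your own disconjugacy argument already gives: at $\lambda = 0$ the left condition reads $\cot\theta_1(0) = \tfrac{n R_2^2(0)}{2}\mu_i > 0$, so $f(\zeta_1) > 0$ and $(pf')(\zeta_1) > 0$; since $(pf')' = qf \ge 0$ as long as $f > 0$, both $f$ and $pf'$ remain positive on $[\zeta_1,1]$, hence $\theta(1;0) \in (0,\pi/2)$, which is strictly below $\theta_2(0) \in (\pi/2,\pi)$. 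With that correction, your chain of reasoning --- continuity and strict monotonicity of the shooting function, $\Phi(0^+) \in (-\pi,0)$, $\Phi(\lambda) \to +\infty$, the intermediate value theorem producing $0 < \lambda_0 < \lambda_1 < \cdots$, and the nodal count from upward crossings of multiples of $\pi$ --- is complete, and your regularity argument ($f \in C^2$ from $p \in C^1$, $p > 0$) matches what the paper obtains by citation.
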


\begin{proof}
 The fact that there are a countably infinite number of real eigenvalues that can be ordered and corresponding eigenfunctions with the prescribed number of zeros is proven by Ince \cite[p. 232-233]{Ince:1956} in Theorem I and Theorem II using 
 \begin{align*}
  &a = \zeta_1, \qquad
  b = 1, \qquad
  K(x,\lambda) = \left(x R_2^2(0) + R_0^2(\tau) \right)\mu(x), \\
  &G(x,\lambda) = \frac{n^2 R_2^4(0)}{4(x R_2^2(0) + R_0^2(\tau))} \mu(x)  -\frac{Q n^2 R_2^2(0)}{4 \pi (x R_2^2(0) + R_0^2(\tau))} \mu'(x) \lambda, \\
  &\alpha = \frac{2 R_1^2(\tau)}{Rn_2^2(0)} \mu(\zeta_1), \qquad
  \alpha' = \mu_i - \lambda F_1, \\
  &\beta = \frac{2 R_2^2(\tau)}{Rn_2^2(0)} \mu(1), \qquad
  \beta' = \mu_o - \lambda F_2.
 \end{align*}
 The regularity of the eigenfunctions comes from the existence theorem of Ince \cite[p. 73]{Ince:1956}. We saw from equation \eqref{VariableViscosity:RayQuotient} that $\sigma$ is real for all $n$, and a closer look at each term 
 in \eqref{VariableViscosity:RayQuotient} shows that if $F_1,F_2 > 0$ and $\mu(\zeta), \mu'(\zeta) >0$, then all terms are positive and $\sigma > 0$.
\end{proof}
\bigskip

\subsection{Self-Adjointness and Expansion Theorem}
We rewrite equation \eqref{VariableViscosity:EigLambda} as
\begin{equation}\label{VariableViscosity:EigWalter}
\left.
\begin{array}{l l}
 -\Big(\left(\zeta R_2^2(0) + R_0^2(\tau) \right) \mu f'(\zeta) \Big)' + \left( \frac{n^2 R_2^4(0)}{4(\zeta R_2^2(0) + R_0^2(\tau))} \right) \mu f(\zeta) =  \frac{Q n^2 R_2^2(0)}{4 \pi (\zeta R_2^2(0) + R_0^2(\tau))} \mu' \lambda f(\zeta), \\ %\qquad &\zeta_1 < \zeta < 1, \\
   - \left(-\frac{\mu_i}{F_1} f(\zeta_1) + \frac{2 R_1^2(\tau)}{Rn_2^2(0) F_1} \mu(\zeta_1) f'(\zeta_1)\right)  = \lambda f(\zeta_1), & \\
 -\left(-\frac{\mu_o}{F_2} f(1) - \frac{2 R_2^2(\tau)}{Rn_2^2(0)F_2} \mu(1) f'(1)\right) = \lambda f(1). &
\end{array} \right\}
\end{equation}
This is of the form
\begin{equation}
\left.
\begin{array}{l l}
 Tf := \frac{1}{r} \left\{-\left(p f'\right)' + qf \right\} = \lambda f, \qquad &\zeta_1 < \zeta < 1, \\
   - \left(\beta_{11} f(\zeta_1) - \beta_{12} f'(\zeta_1) \right) = \lambda \left( \alpha_{11} f(\zeta_1) - \alpha_{12} f'(\zeta_1) \right) , & \\
- \left(\beta_{21} f(1) - \beta_{22} f'(1) \right) = \lambda \left( \alpha_{21} f(1) - \alpha_{22} f'(1) \right), &
\end{array} \right\}
\end{equation}
where
\begin{equation}\label{VariableViscosity:WalterNotation}
\begin{array}{l l}
 p(\zeta) = \left(\zeta R_2^2(0) + R_0^2(\tau) \right) \mu(\zeta), & \\ 
 q(\zeta) = \frac{n^2 R_2^4(0) \mu(\zeta)}{4(\zeta R_2^2(0) + R_0^2(\tau))}, & \\
 r(\zeta) = \frac{Q n^2 R_2^2(0) \mu'(\zeta)}{4 \pi (\zeta R_2^2(0) + R_0^2(\tau))}, & \\
 \beta_{11} = -\frac{\mu_i}{F_1}, & \beta_{12} = -\frac{2 R_1^2(\tau)}{Rn_2^2(0) F_1} \mu(\zeta_1), \\
 \alpha_{11} = 1, & \alpha_{12} = 0, \\
 \beta_{21} = -\frac{\mu_o}{F_2}, & \beta_{22} = \frac{2 R_2^2(\tau)}{Rn_2^2(0) F_2} \mu(1), \\
 \alpha_{21} = 1, & \alpha_{22} = 0.
\end{array}
\end{equation}
Given the same assumptions as in Theorem \ref{VariableViscosity:Thm1}, we have the following theorem from \cite{Walter:1973}.
\begin{theorem}\label{VariableViscosity:Thm2}
Let $F_1$, $F_2$, $Q$, $n$, $\mu_i$, $\mu_o > 0$. Let $\mu(\zeta)$ be a positive, strictly increasing function in $C^1([\zeta_1,1])$.
Let $p(\zeta)$, $q(\zeta)$, and $r(\zeta)$ be defined by \eqref{VariableViscosity:WalterNotation}. Let 
\begin{equation*}
 L^2_r(\zeta_1,1) = \left\{f(\zeta) \Big\vert \int_{\zeta_1}^{1} |f(\zeta)|^2 r(\zeta) d \zeta < \infty \right\},
\end{equation*}
and define the operator $T$ on $L^2_r(\zeta_1,1)$ by
\begin{equation*}
 Tf := \frac{1}{r} \left\{-\left(p f'\right)' + q f\right\}.
\end{equation*}
Define the measure:
\begin{equation}
 \nu(M) := \left\{
 \begin{array}{l l}
  \frac{nR_2^2(0)F_1}{2}, \qquad & \text{for } M = \{\zeta_1\} \\
  \int_{M} r(\zeta) d\zeta, \qquad & \text{for } M \subset (\zeta_1,1) \\
  \frac{nR_2^2(0)F_2}{2}, \qquad & \text{for } M = \{1\}. \\
 \end{array}
 \right.
\end{equation}
We consider the Hilbert space $H := L^2([\zeta_1,1];\nu)$. Consider the operator $A$ with domain
\begin{equation}
 D(A) = \{f \in H \vert f,f' \text{ absolutely continuous in } (\zeta_1,1), T \in L^2_r(\zeta_1,1) \},
\end{equation}
and defined by
\begin{equation}
 (Af)(\zeta) = \left\{
 \begin{array}{l l}
  \displaystyle{\lim_{\zeta \to \zeta_1}} \left(\frac{\mu_i}{F_1} f(\zeta) - \frac{2 R_1^2(\tau)}{Rn_2^2(0) F_1} \mu(\zeta_1) f'(\zeta)\right), \qquad & \text{if } \zeta = \{\zeta_1\} \\
  (Tf)(\zeta), \qquad & \text{if } \zeta \in (\zeta_1,1) \\
  \displaystyle{\lim_{\zeta \to 1}} \left(\frac{\mu_o}{F_2} f(\zeta) + \frac{2 R_2^2(\tau)}{Rn_2^2(0) F_2} \mu(1) f'(\zeta)\right), \qquad & \text{if } \zeta = \{1\}. \\
 \end{array}
 \right.
\end{equation}
Then $(f,\lambda)$ satisfies \eqref{VariableViscosity:EigWalter} if and only if $Af = \lambda f$. 
$A$ is a self-adjoint operator on $H$ and for any $u \in H$,
\begin{equation*}
 u = \sum_{k=0}^{\infty} f_k \int_{\zeta_1}^{1} u(\zeta) f_k(\zeta) d \nu,
\end{equation*}
where the $f_k$ are the eigenfunctions of $A$.
\end{theorem}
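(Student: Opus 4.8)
\section*{Proof proposal}

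The plan is to read \eqref{VariableViscosity:EigWalter} as a \emph{regular} Sturm--Liouville problem in which the spectral parameter enters linearly in the boundary conditions, and to resolve it by Walter's device of enlarging the state space so that the boundary data become genuine coordinates of the vector on which the operator acts. Concretely, because $\nu$ carries point masses at the two endpoints, an element of $H=L^2([\zeta_1,1];\nu)$ is a function on the interior \emph{together with} two independent boundary numbers; the operator $A$ acts as $T$ on $(\zeta_1,1)$ and as the prescribed limiting expressions at $\zeta_1$ and $1$. Under this reading the interior equation $Tf=\lambda f$ and the two boundary relations of \eqref{VariableViscosity:EigWalter} are jointly equivalent to the single equation $Af=\lambda f$, which is the first assertion of the theorem.

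First I would verify the hypotheses under which Walter's theorem applies. The problem is regular: on the finite interval $[\zeta_1,1]$ the coefficients $p,q,r$ of \eqref{VariableViscosity:WalterNotation} are continuous (using $\mu\in C^1$), and $p(\zeta)=(\zeta R_2^2(0)+R_0^2(\tau))\mu(\zeta)>0$ since $\mu>0$ and the prefactor is positive throughout $[\zeta_1,1]$. The remaining two structural requirements are exactly the standing hypotheses: the weight $r(\zeta)=\frac{Qn^2R_2^2(0)}{4\pi(\zeta R_2^2(0)+R_0^2(\tau))}\mu'(\zeta)$ is strictly positive because $\mu$ is strictly increasing, and the endpoint masses $\nu(\{\zeta_1\})=\tfrac{n}{2}R_2^2(0)F_1$ and $\nu(\{1\})=\tfrac{n}{2}R_2^2(0)F_2$ are positive because $F_1,F_2>0$ (with $n,Q,R_2^2(0)>0$). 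These positivity statements are precisely what make $\nu$ a bona fide positive measure and $H$ a genuine Hilbert space with positive-definite inner product.

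The heart of the argument is the symmetry of $A$, which I would obtain from Green's formula
\[
 \langle Tf,g\rangle_{L^2_r}-\langle f,Tg\rangle_{L^2_r}=\Big[p(\zeta)\big(f(\zeta)\overline{g'(\zeta)}-f'(\zeta)\overline{g(\zeta)}\big)\Big]_{\zeta_1}^{1}.
\]
The boundary values of $A$ were defined so as to cancel these endpoint terms once they are weighted by the point masses. The key identities are $p(\zeta_1)=R_1^2(\tau)\mu(\zeta_1)$ and $p(1)=R_2^2(\tau)\mu(1)$, which follow from $\zeta_1 R_2^2(0)+R_0^2(\tau)=R_1^2(\tau)$ and $R_2^2(0)+R_0^2(\tau)=R_2^2(\tau)$. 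Forming $(Af)(\zeta_1)\overline{g(\zeta_1)}-f(\zeta_1)\overline{(Ag)(\zeta_1)}$, the $\mu_i/F_1$ terms drop out, and what survives, multiplied by $\nu(\{\zeta_1\})=\tfrac{n}{2}R_2^2(0)F_1$, equals $R_1^2(\tau)\mu(\zeta_1)\big(f(\zeta_1)\overline{g'(\zeta_1)}-f'(\zeta_1)\overline{g(\zeta_1)}\big)$ up to sign; this is exactly the $\zeta_1$-contribution of the bracket above, so the two cancel, and the same computation at $\zeta=1$ disposes of the other endpoint. Hence $\langle Af,g\rangle_H=\langle f,Ag\rangle_H$, so $A$ is symmetric. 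Promoting symmetry to self-adjointness is then Walter's theorem for regular problems: the two-point regular boundary conditions are maximal, the deficiency indices agree, and $D(A^{\ast})=D(A)$; equivalently one checks that $(A-\lambda)^{-1}$ exists as a bounded, everywhere-defined operator for non-real $\lambda$.

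Finally, the expansion theorem follows by combining self-adjointness with the discreteness established in Theorem \ref{VariableViscosity:Thm1}. Because the problem is regular, $(A-\lambda)^{-1}$ is an integral operator with a continuous Green's kernel on the interior modified by finite-rank boundary contributions, hence compact; equivalently, Theorem \ref{VariableViscosity:Thm1} already exhibits the countably infinite, ordered eigenvalue sequence $\{\lambda_k\}$ with eigenfunctions $\{f_k\}$. The spectral theorem for a self-adjoint operator with compact resolvent then gives that, after normalization in $H$, the $\{f_k\}$ form a complete orthonormal system, which is exactly the stated expansion $u=\sum_{k}f_k\int_{\zeta_1}^1 u f_k\,d\nu$, the coefficients being the $\nu$-inner products. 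I expect the single delicate point to be the boundary-term cancellation in the symmetry step, since it is what simultaneously forces the exact values of the point masses and pins down the sign hypotheses $F_1,F_2>0$; everything else is hypothesis-checking and an appeal to Walter's machinery.
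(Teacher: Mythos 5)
Your proposal is correct and is essentially the paper's approach: the paper recasts \eqref{VariableViscosity:EigLambda} as \eqref{VariableViscosity:EigWalter}, identifies the coefficients and boundary data via \eqref{VariableViscosity:WalterNotation}, checks the hypotheses ($p\in C^1$, $q,r\in C^0$ with $p,r>0$, and $F_1,F_2>0$), and then simply cites Walter (1973) for precisely the construction you describe. The only difference is that you unpack Walter's machinery --- the Green's-formula symmetry computation in which the identities $p(\zeta_1)=R_1^2(\tau)\mu(\zeta_1)$ and $p(1)=R_2^2(\tau)\mu(1)$ make the boundary terms cancel against the point masses $\tfrac{n}{2}R_2^2(0)F_1$ and $\tfrac{n}{2}R_2^2(0)F_2$, followed by self-adjointness and the compact-resolvent expansion --- whereas the paper leaves all of that to the citation.
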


\subsection{Notes on the Assumptions}
The above theorem holds assuming 
\begin{equation*}
 p \in C^1([\zeta_1,1]), \qquad q \in C^0([\zeta_1,1]), \qquad r \in C^0([\zeta_1,1]),
\end{equation*}
and $p(\zeta) > 0, r(\zeta) > 0$ for $\zeta \in [\zeta_1,1]$. This is satisfied if
\begin{equation}
 \mu(\zeta) \in C^1([\zeta_1,1]), \qquad \mu'(\zeta) > 0.
\end{equation}
It is also assumed that $F_1, F_2 > 0$ which holds when
\begin{equation}
 n < \min \left\{\sqrt{\frac{Q R_1(\tau)}{2 \pi T_1} (\mu(\zeta_1) - \mu_i) + 1}, \sqrt{\frac{Q R_2(\tau)}{2 \pi T_2} (\mu_o - \mu(1)) + 1} \right\}.
\end{equation}
A different theory will be necessary to consider non-monotonic viscous profiles or large wave numbers.

\section{Numerical Results}\label{sec:numerical}
We now investigate the growth rate of disturbances by numerically computing the eigenvalues of the eigenvalue problem \eqref{VariableViscosity:EigLambda}. This eigenvalue problem has time-dependent coefficients and boundary conditions which depend on the eigenvalues. Thus the dispersion relation for this problem depends on time. The eigenvalues are computed using a pseudo-spectral Chebyshev method. The eigenvalues $\lambda$ are then inverted to find the growth rates $\sigma$. Recall that for a given wave number $n$, there are infinitely many eigenvalues. In the results that follow, $\sigma$ refers to the maximum over all eigenvalues. $\sigma_{max}$ refers to the maximum over all eigenvalues and over all wave numbers. For consistency, we often use the same parameter values throughout our results. Unless otherwise stated, $\mu_i = 2$, $\mu_o = 10$, $T_1 = T_2 = 1$, and $Q = 10$. Therefore, the inner and outer layer fluids have constant viscosity 2 and 10 respectively for all our studies here. The viscous profile of the middle layer fluid, however, is a free variable which can be taken as constant or variable in our studies below. In the rest of the paper, we will characterize the flow by the viscosity of the middle layer.

\subsection{Constant vs. Variable Viscosity}\label{sec:ConstantVsVar}
\begin{figure}[!ht]
\begin{subfigure}[b]{.5\textwidth}
  \centering
  \includegraphics[width=\textwidth, height=\textwidth]{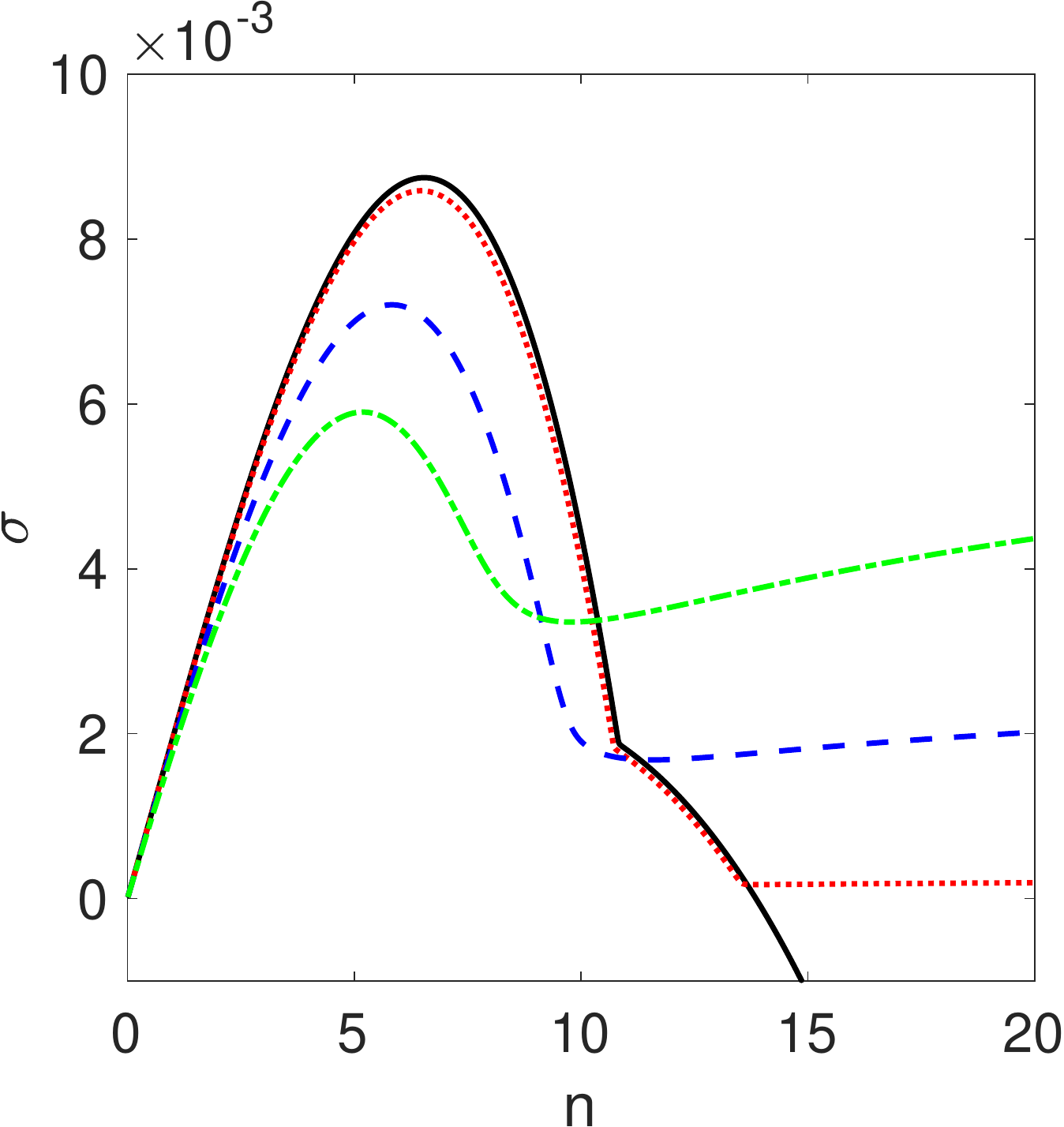}
  \caption{}
  \label{fig:ConstantVsVar_Disp}
\end{subfigure}%
\begin{subfigure}[b]{.5\textwidth}
  \centering
  \includegraphics[width=\textwidth, height=\textwidth]{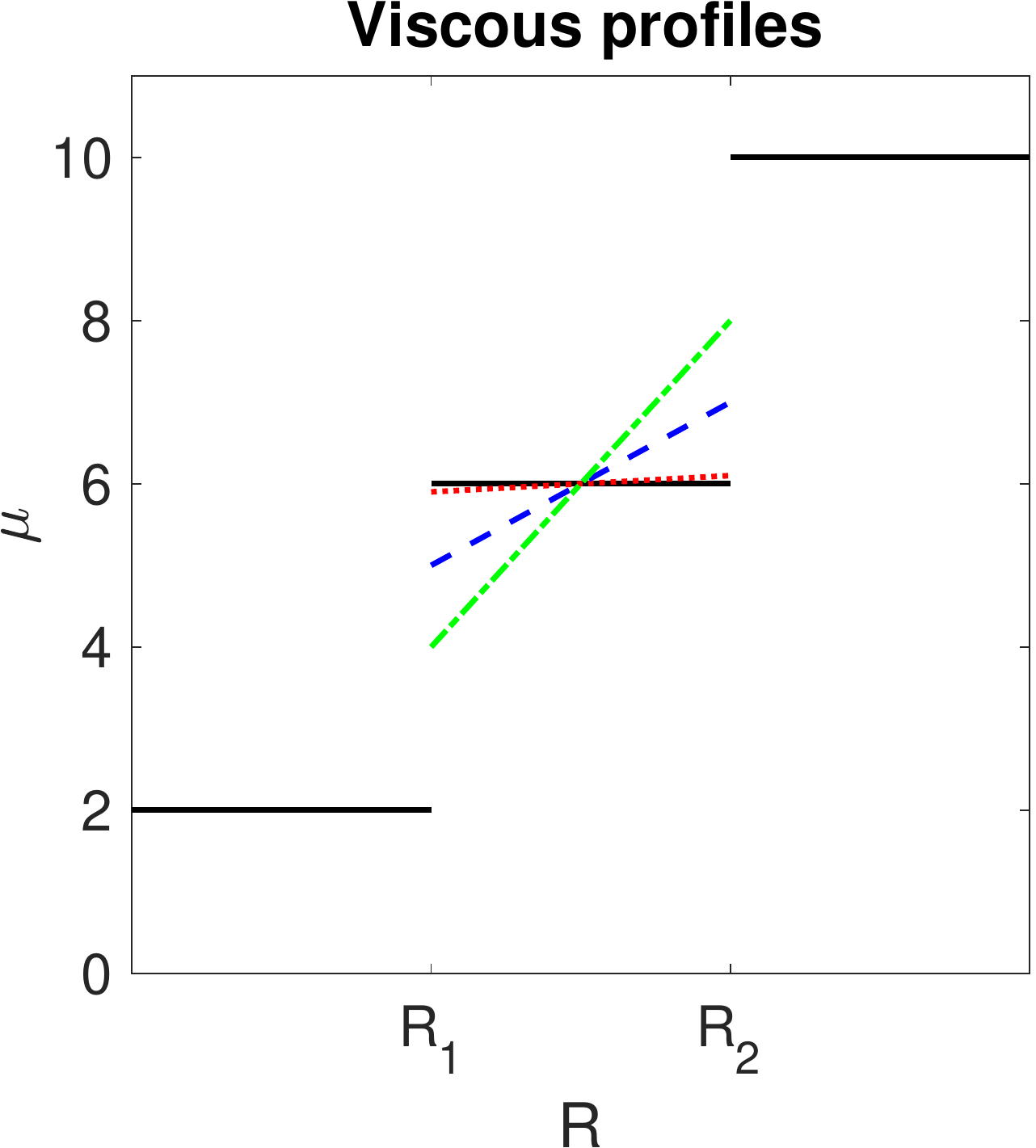}
  \caption{}
  \label{fig:ConstantVsVar_Profs}
\end{subfigure}
\caption{A comparison of dispersion relations for four linear viscous profiles. Sub figure (a) shows the dispersion relations ($\sigma$ vs. $n$) at time $\tau = 0$ and sub figure (b) depicts the associated viscous profiles. The parameter values are $Q = 10$, $\mu_i = 2$, $\mu_o = 10$, $T_1 = T_2 = 1$, $R_1(0) = 20$, and $R_2(0) = 30$.}
\label{fig:ConstantVsVar}
\end{figure}

We begin by comparing the growth rate of disturbances for a constant viscosity profile with that for a variable viscous profile. In Figure \ref{fig:ConstantVsVar_Disp}, the dispersion relations at a fixed time ($\tau = 0$) are plotted for four different viscous profiles shown in Figure \ref{fig:ConstantVsVar_Profs}. The constant viscosity case is given by the solid (black) line and the viscosity of the middle layer fluid is $\mu = 6$. The stability of three-layer constant viscosity flows has been studied extensively by \cite{gin-daripa:hs-rad, gin-daripa:constant_interfaces}. Note that there is a maximum growth rate and that short waves are stable due to interfacial tension. For comparison, three linear viscous profiles are considered. The dotted (red) line corresponds to a linear viscous profile with $\mu(R_1) = 5.9$ and $\mu(R_2) = 6.1$, the dashed (blue) line corresponds to $\mu(R_1) = 5$ and $\mu(R_2) = 7$, and the dash-dot (green) line corresponds to $\mu(R_1) = 4$ and $\mu(R_2) = 8$. There are several important features to notice. First, the dispersion relation for each of the variable viscous profiles has a local maximum for a wave number that is similar to the maximum for the case of a constant viscous profile. For profiles with smaller viscous jumps at the interfaces, the local maximum is smaller. Therefore, this local maximum can be attributed to the instability of the interfaces due to the positive viscous jump. The second thing to notice is that short waves are unstable for variable viscous profiles, even when the viscous profile is nearly constant (see the dotted line). As the gradient of the viscous profiles increase, the growth rate of short waves also increases. Therefore, the short wave behavior is dominated by the instability of the middle layer fluid itself due to the viscous gradient. The final observation which can be drawn from Figure \ref{fig:ConstantVsVar} is that the maximum growth rate can be smaller for a variable viscous profile than it is for a constant viscous profile with constant viscosity equal to the average of the values of viscosity at the two interfaces in the middle layer of the variable viscosity profile.

\subsection{Optimal Profile}
In subsection \ref{sec:ConstantVsVar}, it is shown that some particular variable viscous profiles are less unstable (i.e. have a smaller maximum growth rate) than a particular constant viscous profile. This leads to some more general questions: Are there variable viscous profiles that are less unstable than all constant viscous profiles? What is the optimal viscous profile? 

The question of the optimal viscous profile is a difficult one so we start by using some simplifying assumptions. First, recall that the viscous profile and the growth rate are both time-dependent. In this section we only consider the growth rate at time $\tau = 0$. This is reasonable because it is advantageous to control the instability at early times. Therefore, for the present discussion the term ``optimal'' refers to the viscous profile that minimizes the maximum growth rate $\sigma_{max}$ at time $\tau = 0$. The second simplification is that we first consider only viscous profiles in the middle layer that are linear at time $\tau = 0$. Note that linear profiles are uniquely determined by the values $\mu(R_1)$ and $\mu(R_2)$. Other types of viscous profiles will be considered later in this section.

  \begin{figure}[!ht]
\begin{subfigure}[b]{.5\textwidth}
  \centering
  \includegraphics[width=\textwidth, height=\textwidth]{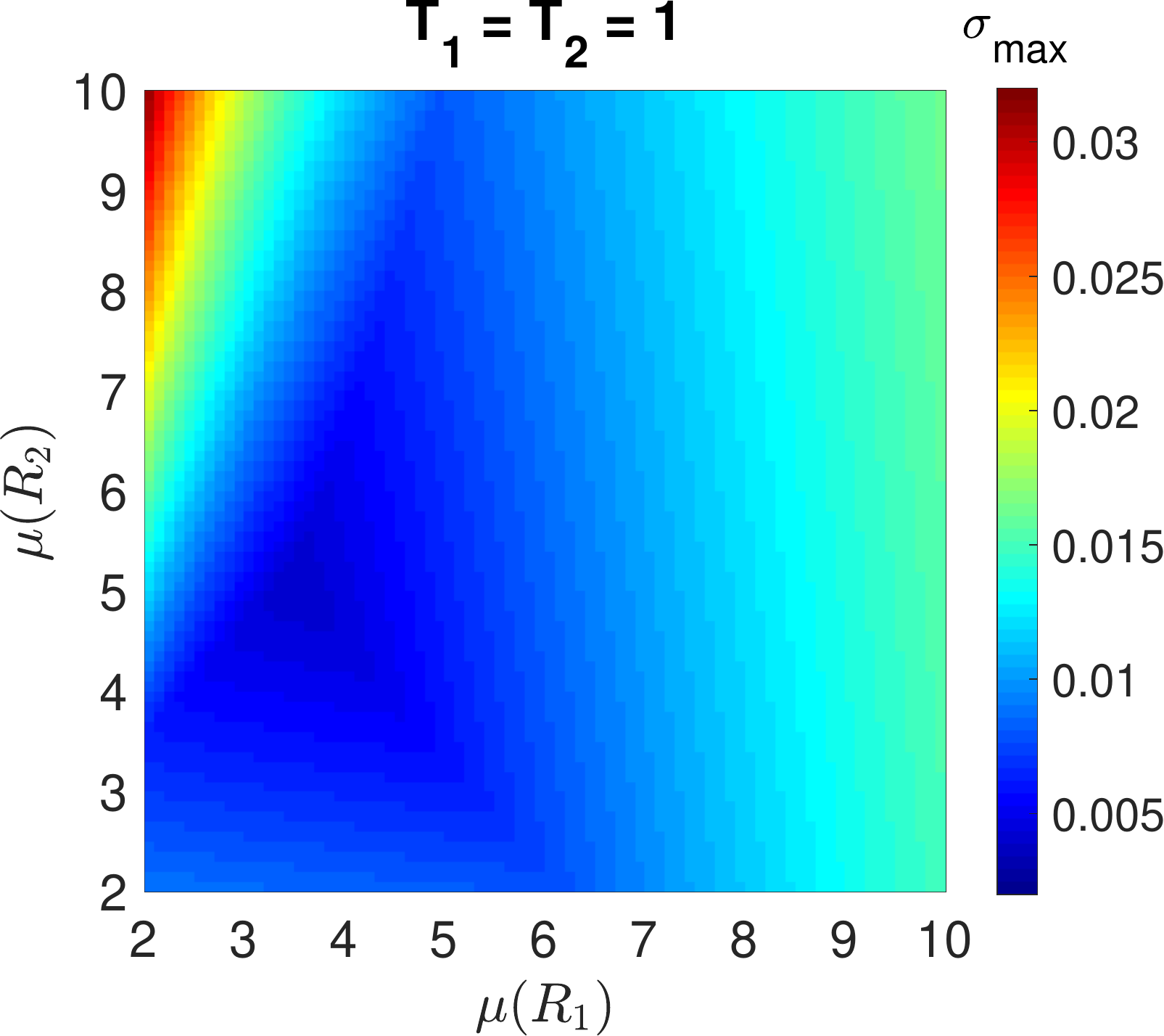}
  \caption{}
  \label{fig:Optimal_Linear_Color}
\end{subfigure}%
\begin{subfigure}[b]{.5\textwidth}
  \centering
  \includegraphics[width=\textwidth, height=\textwidth]{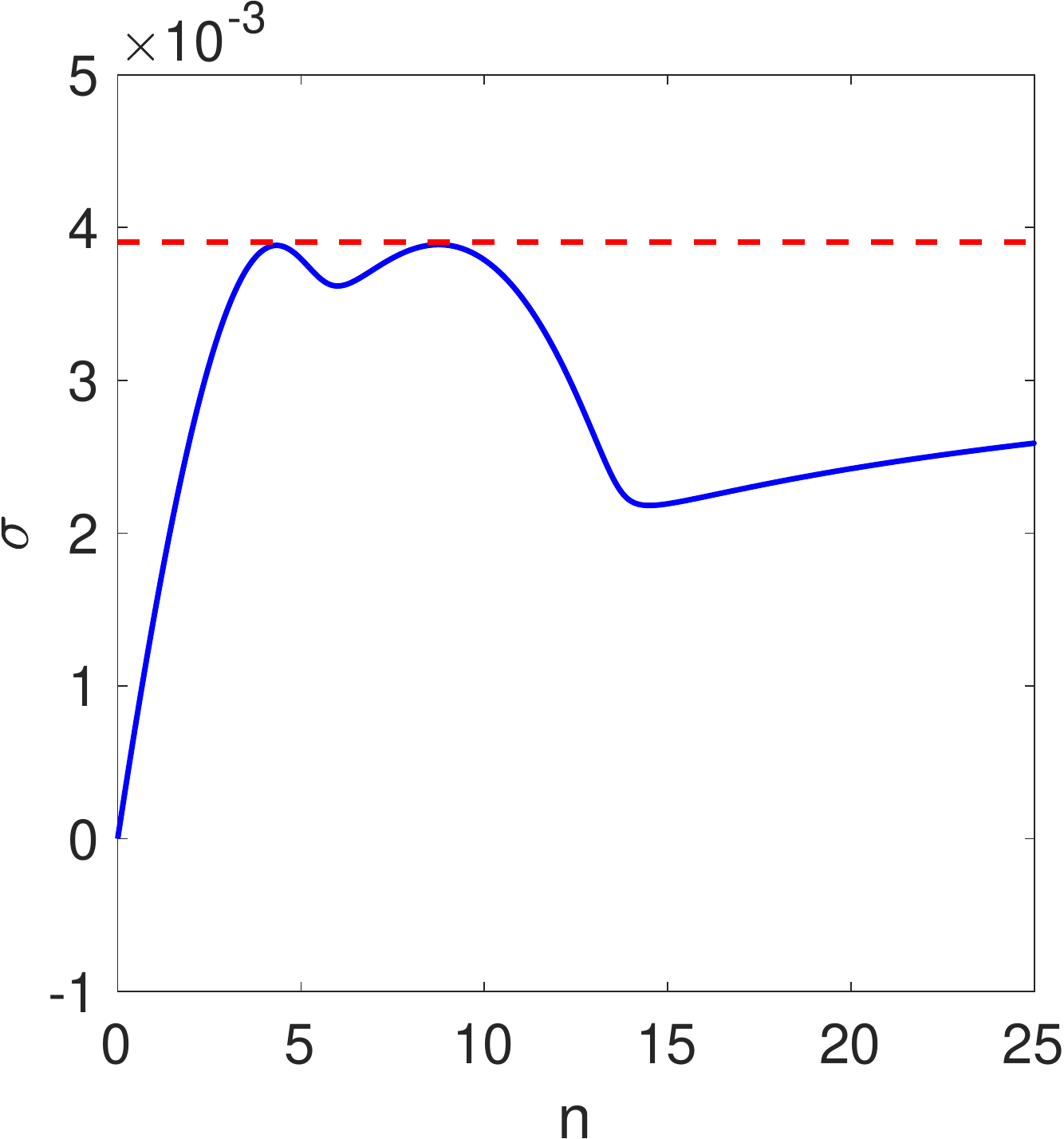}
  \caption{}
  \label{fig:Optimal_DispRel}
\end{subfigure}
\caption{Sub figure (a) gives the value of $\sigma_{max}$ for each linear viscous profile which is defined by the values $\mu(R_1)$ and $\mu(R_2)$. The optimal profile is $\mu(R_1) = 3.41$ and $\mu(R_2) = 5.09$ and its dispersion relation is given in sub figure (b). The parameter values are $Q = 10$, $\mu_i = 2$, $\mu_o = 10$, $T_1 = T_2 = 1$, $R_1(0) = 20$, and $R_2(0) = 30$.}
\label{fig:Optimal_Linear}
\end{figure}

Figure \ref{fig:Optimal_Linear_Color} shows the value of $\sigma_{max}$ for each linear viscous profile such that the viscosity of the middle layer is between $\mu_i$ and $\mu_o > \mu_i$. The optimal viscous profile in this case is $\mu(R_1) = 3.41$ and $\mu(R_2) = 5.09$. Note that all possible constant viscous profiles have been considered as a subset of the set of linear viscous profiles. Therefore, the fact that the optimal profile is not constant shows that variable viscous profiles can be used to reduce the instability of a flow. The dispersion curve for the optimal viscous profile shown in Figure \ref{fig:Optimal_Linear} approaches the value indicated by the dotted horizontal line as $n \to \infty$. The two local maxims in this plot have the same value as this limit. This is because the optimal viscous profile is the one which balances the instabilities of the interfaces with the instability of the middle layer.

Next we investigate the optimal viscous profile under several different values of interfacial tension. Plots of $\sigma_{max}$ versus the different linear profiles is given in Figure \ref{fig:ColorOptimal}.  Figure \ref{fig:ColorT25} has the smallest value of interfacial tension with $T_1 = T_2 = 0.25$. The optimal viscous profile has endpoint viscosities of $\mu(R_1) = 3.20$ and $\mu(R_2) = 5.65$.  Figure \ref{fig:ColorT1} uses $T_1 = T_2 = 1$ and is a repeat of Figure \ref{fig:Optimal_Linear_Color}. As noted above, the optimal profile is $\mu(R_1) = 3.41$ and $\mu(R_2) = 5.09$. Figure \ref{fig:ColorT4} has the largest values of interfacial tension with $T_1 = T_2 = 4$. The optimal linear viscous profile is $\mu(R_1) = 3.47$ and $\mu(R_2) = 4.53$. The trend is that larger values of interfacial tension correspond to optimal viscous profiles with a smaller viscous gradient. This is because, as mentioned previously, the optimal viscous profile is the one which balances the instabilities of the interfaces with the instability of the middle layer. A larger value of interfacial tension decreases the instability of the interfaces so the gradient of the middle layer must also decrease in order to match the interfacial instability. 

 \begin{figure}[!ht]
\begin{subfigure}[b]{.32\textwidth}
  \centering
  \includegraphics[width=\textwidth, height=\textwidth]{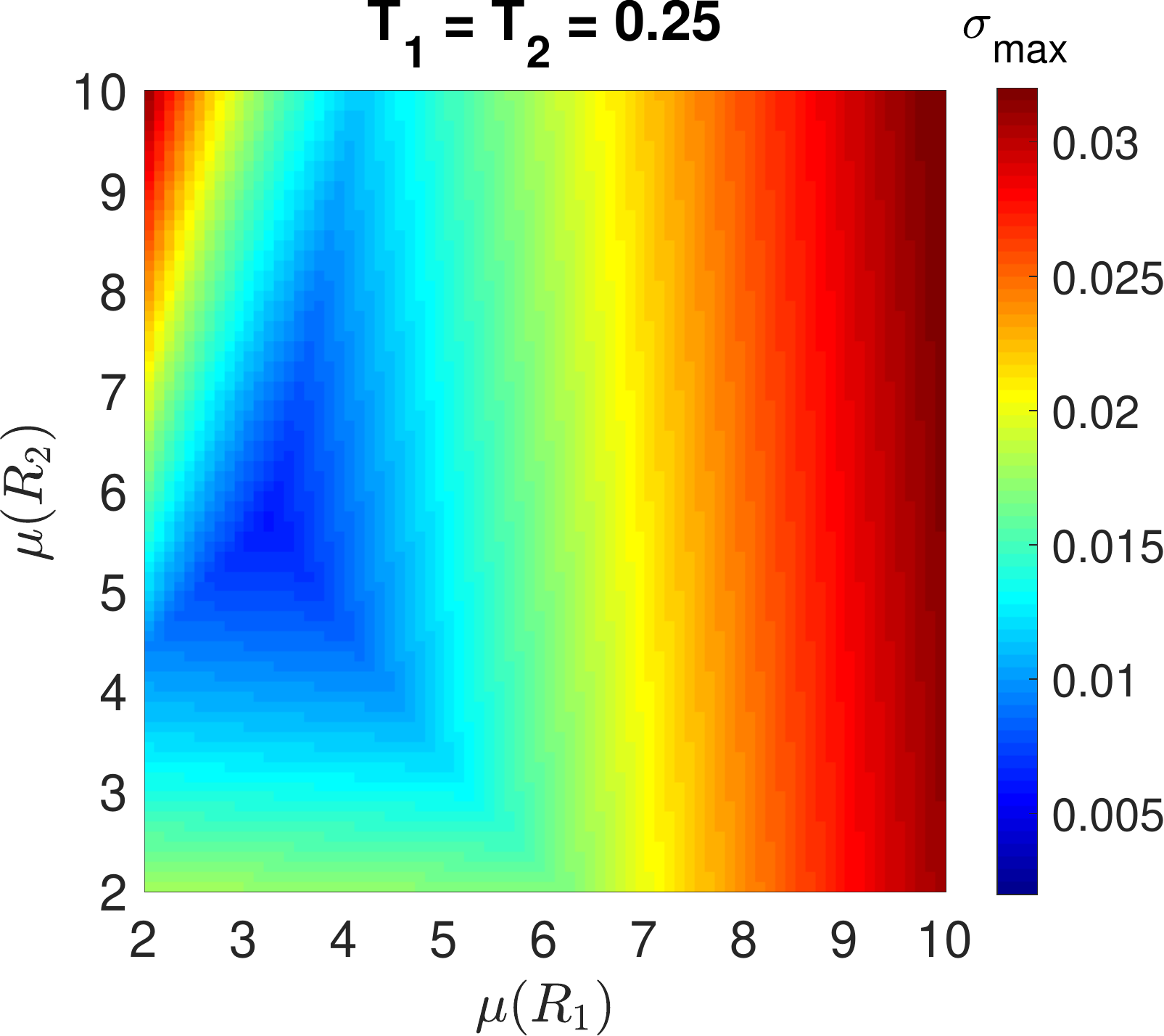}
  \caption{}
  \label{fig:ColorT25}
\end{subfigure}%
\begin{subfigure}[b]{.32\textwidth}
  \centering
  \includegraphics[width=\textwidth, height=\textwidth]{ColorT1.pdf}
  \caption{}
  \label{fig:ColorT1}
\end{subfigure}
\begin{subfigure}[b]{.32\textwidth}
  \centering
  \includegraphics[width=\textwidth, height=\textwidth]{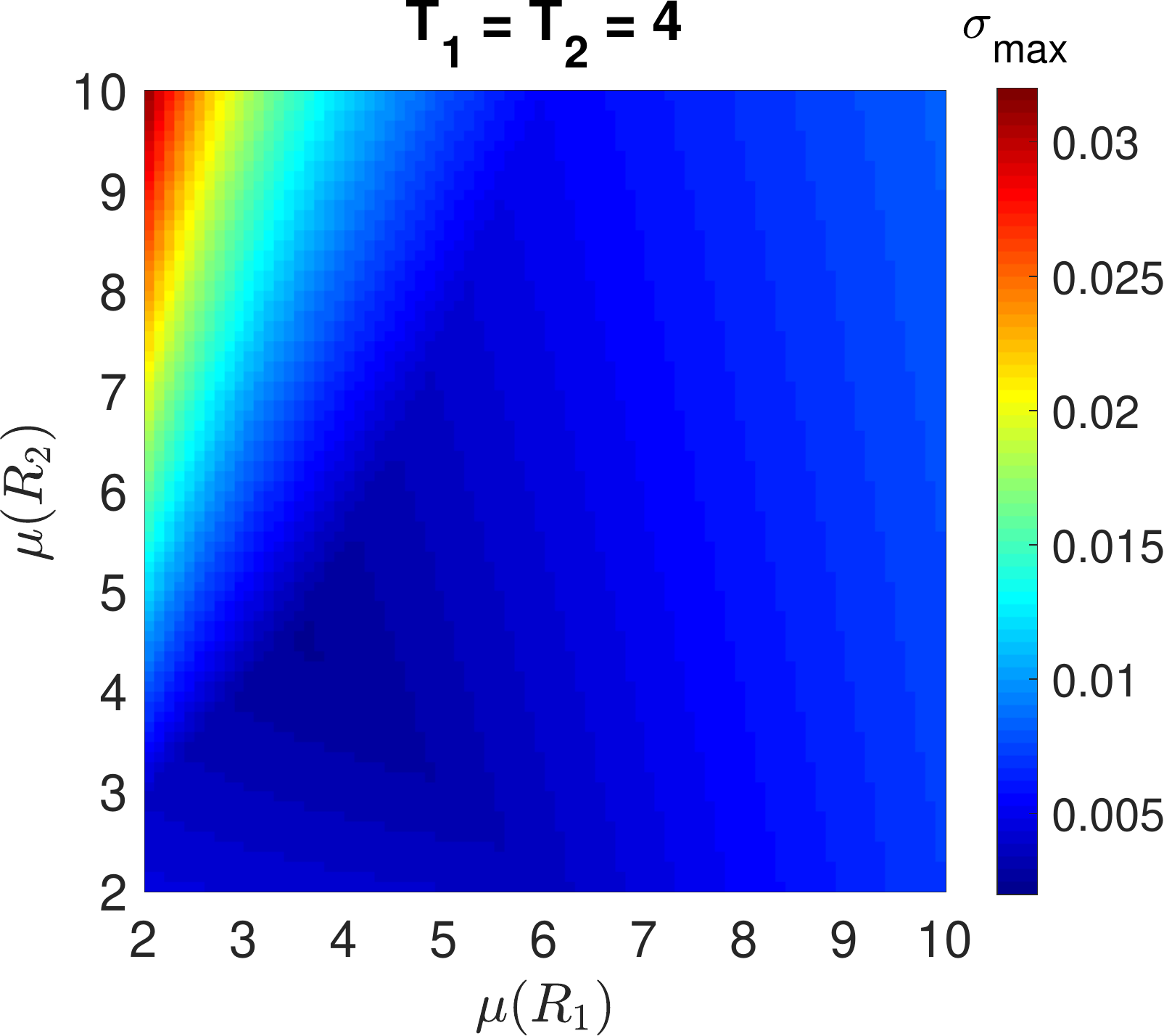}
  \caption{}
  \label{fig:ColorT4}
\end{subfigure}
\caption{Plots of $\sigma_{max}$ for all linear viscous profiles for three different values of interfacial tension: (a) $T_1 = T_2 = 0.25$, (b) $T_1 = T_2 = 1$, and (c) $T_1 = T_2 = 4$. Other parameter values are $Q = 10$, $\mu_i = 2$, $\mu_o = 10$, $R_1(0) = 20$, and $R_2(0) = 30$.}
\label{fig:ColorOptimal}
\end{figure}

In the case of chemical EOR by polymer flooding, the viscous profile of the middle layer fluid is determined by the concentration of polymer. The use of large quantities of polymer can be expensive so it is useful to explore which viscous profile minimizes the instability (i.e. $\sigma_{\rm max}$) given a fixed amount of total polymer. Assuming that there is a linear relationship between the concentration of polymer and the viscosity, this can be viewed as minimizing the instability for a fixed value of average viscosity. The results of this type of optimization are given in  Figure \ref{fig:Optimal_Fixed_Concentration}. The value of $\sigma_{max}$ for all linear viscous profiles is plotted in Figure \ref{fig:Optimal_Fixed_Concentration_Color} using the same parameter values as  Figure \ref{fig:Optimal_Linear_Color}. For each possible average viscosity between $\mu_i$ and $\mu_o$, the profile which minimizes $\sigma_{max}$ was found and is marked by an `$x$' in Figure \ref{fig:Optimal_Fixed_Concentration_Color}. The viscous profile can be identified by its slope $a = (\mu(R_2) - \mu(R_1))/(R_2-R_1)$. The slopes of the optimal profiles are plotted versus the average viscosity of the middle layer in Figure \ref{fig:Optimal_Fixed_Concentration_Slope}.

  \begin{figure}[!ht]
\begin{subfigure}[b]{.5\textwidth}
  \centering
  \includegraphics[width=\textwidth, height=\textwidth]{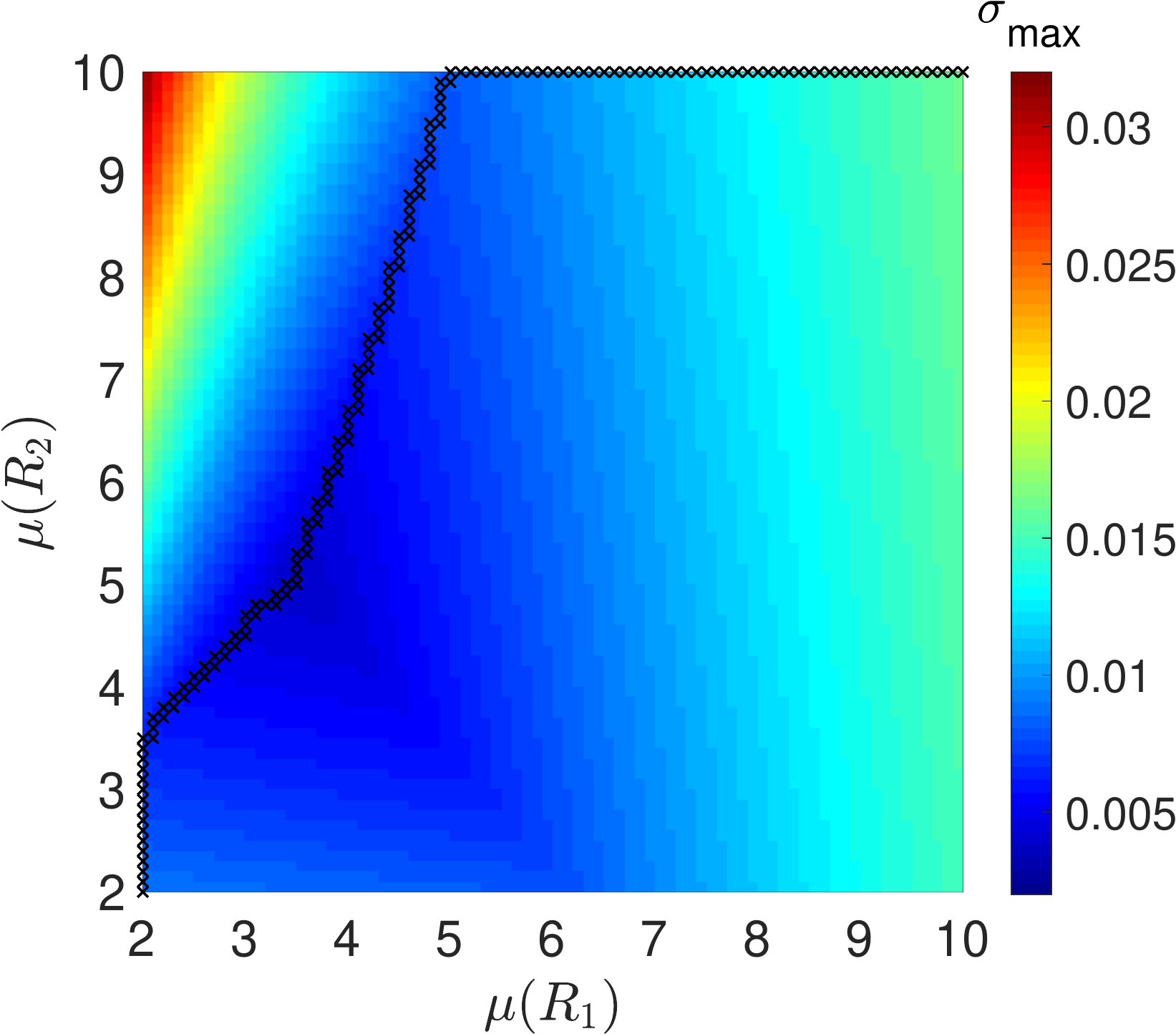}
  \caption{}
  \label{fig:Optimal_Fixed_Concentration_Color}
\end{subfigure}%
\begin{subfigure}[b]{.5\textwidth}
  \centering
  \includegraphics[width=\textwidth, height=\textwidth]{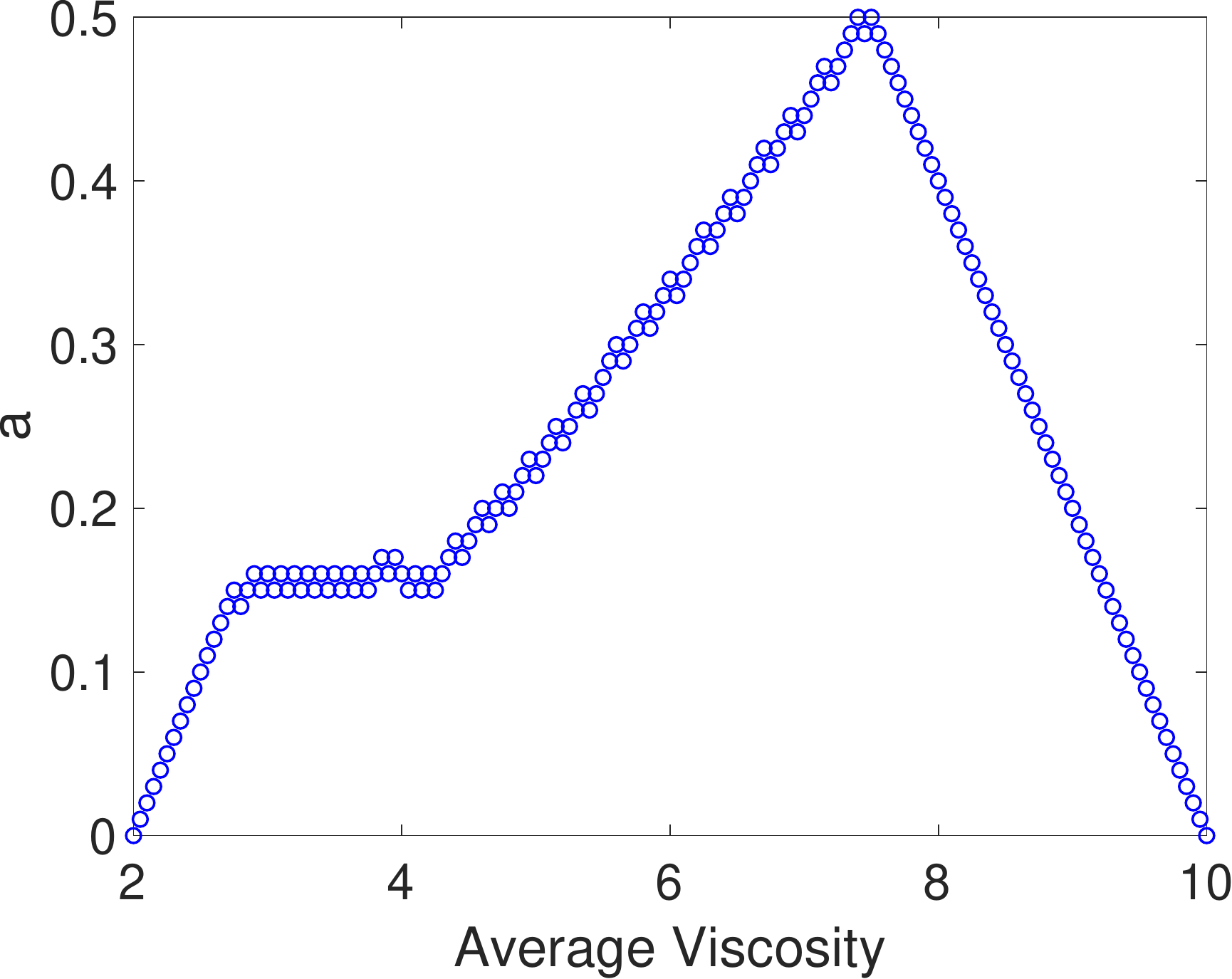}
  \caption{}
  \label{fig:Optimal_Fixed_Concentration_Slope}
\end{subfigure}
\caption{Plots of the optimal linear viscous profile for a fixed value of average viscosity of the middle layer fluid. Sub figure (a) shows the value of $\sigma_{max}$ versus the different linear viscous profiles with $x$'s to denote the optimal profiles. Sub figure (b) is a plot of the slope of the optimal profile versus the average viscosity. The parameter values are $Q = 10$, $\mu_i = 2$, $\mu_o = 10$, $T_1 = T_2 = 1$, $R_1(0) = 20$, and $R_2(0) = 30$.}
\label{fig:Optimal_Fixed_Concentration}
\end{figure}

When the average viscosity of the middle layer fluid is $\mu_i = 2$, the optimal viscous profile is constant at $\mu \equiv 2$ (note that this is the only profile considered since we are taking $\mu_i \leq \mu(r) \leq \mu_o$ for $R_1 \leq r \leq R_2$). Therefore there is no viscous jump at the inner interface and no instability in the layer itself. All of the instability occurs due to the jump in viscosity at the outer interface. As the average viscosity increases from there, the jump at the inner interface of the optimal viscous profile remains zero while the viscous gradient increases in order to decrease the viscous jump at the outer interface. Eventually, as the average viscosity nears $\mu = 3$, a point is reached in which the gradient stops increasing as illustrated by the flat portion of the graph in Figure \ref{fig:Optimal_Fixed_Concentration_Slope}. During this time, the viscous jump at the inner interface increases while the viscous jump at the outer interface decreases. The point at which the slope begins to increase again corresponds to the optimal viscous profile over all values of average viscosity. After this point, the addition of polymer to increase viscosity would be detrimental to the stability of the system.

\begin{figure}[!ht]
\begin{subfigure}[b]{.5\textwidth}
  \centering
   \includegraphics[width=\textwidth, height=\textwidth]{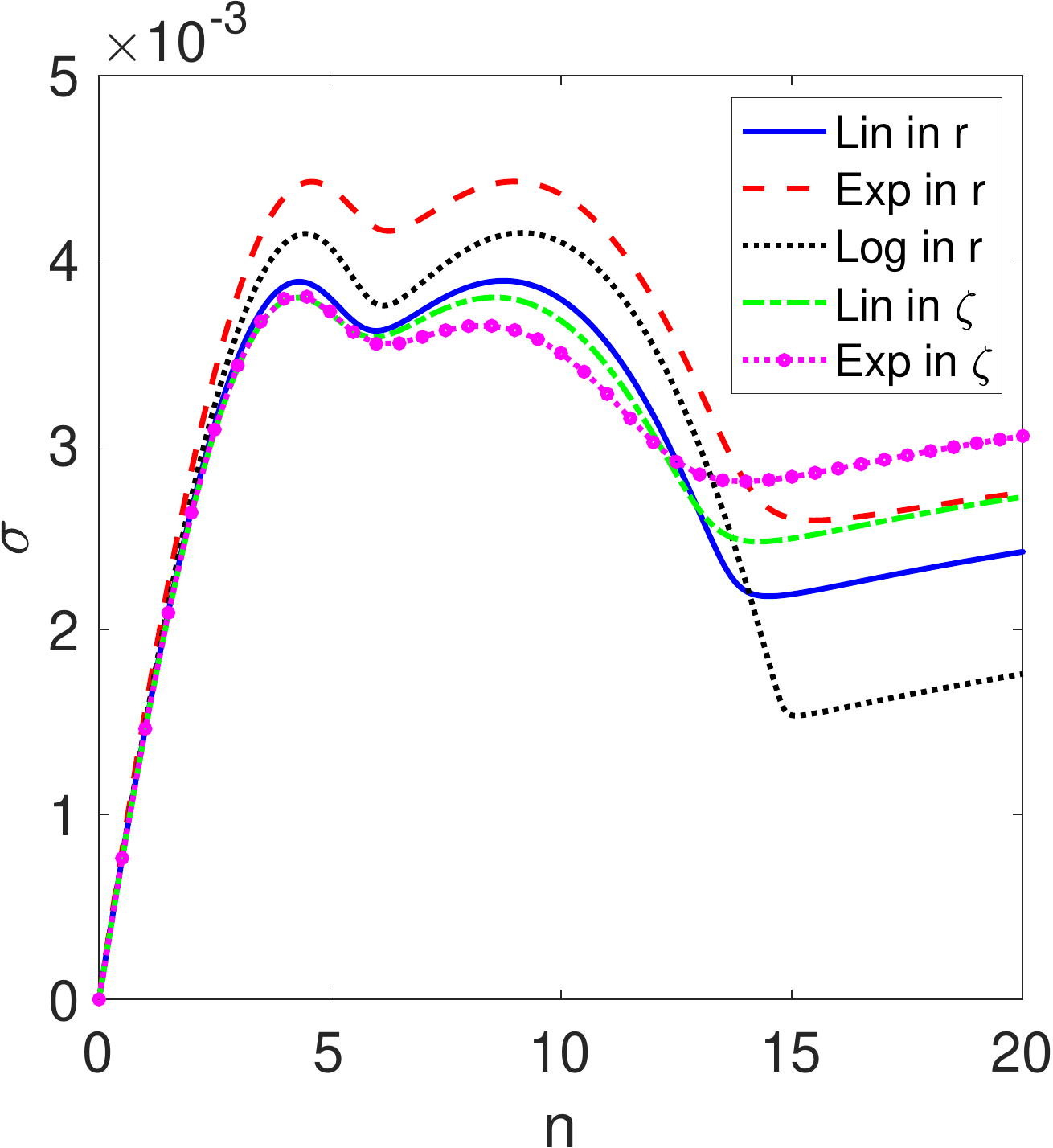}
  \caption{}
 \label{fig:Compare_Opt_Profiles}
\end{subfigure}%
\begin{subfigure}[b]{.5\textwidth}
  \centering
   \includegraphics[width=\textwidth, height=\textwidth]{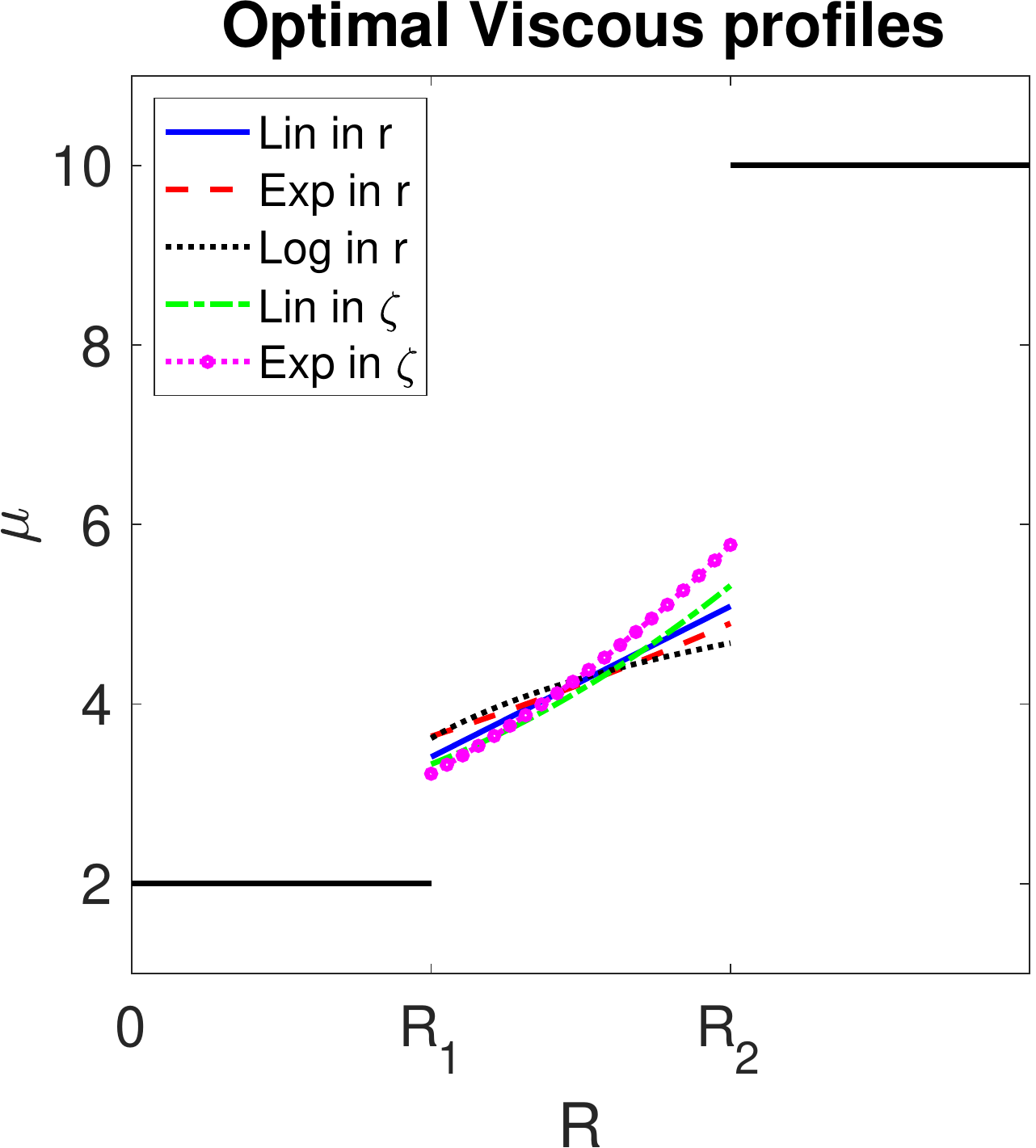}
  \caption{}
  \label{fig:Compare_Opt_Profiles2}
\end{subfigure}
 \caption{(a) Plots of the dispersion relations for the optimal viscous profiles which are (i) linear with respect to $r$,  (ii) exponential with respect to $r$,  (iii) logarithmic with respect to $r$,  (iv) linear with respect to $\zeta$, and (v) exponential with respect to $\zeta$. (b) Plots of the corresponding viscous profiles.}
 \label{fig:Compare_Opt}
\end{figure}

 To this point, only linear viscous profiles have been considered, but many other viscous profiles can also be used. The optimization procedure used for linear viscous profiles in  Figure \ref{fig:Optimal_Linear_Color} has been repeated for several other types of viscous profiles in  Figure \ref{fig:Compare_Opt}. In addition to a viscous profile which is linear at time $\tau = 0$, viscous profiles which are initially exponential and logarithmic are also considered. Also, recall that the viscous profile changes with time in the $r$-coordinates. Therefore, it may be useful to consider viscous profiles in the $\zeta$-coordinate system because they will remain fixed in time. Therefore, we also consider viscous profiles which are linear and exponential with respect to $\zeta$. A profile which is linear with respect to $\zeta$ is quadratic with respect to $r$ and a profile which is exponential in $\zeta$ is proportional to $e^{r^2}$. Figure \ref{fig:Compare_Opt_Profiles} shows the dispersion relations of the optimal viscous profiles of each type. The profile which is exponential in $r$ is the most unstable while the profile which is exponential in $\zeta$ is the least unstable. The corresponding optimal viscous profiles are plotted in Figure \ref{fig:Compare_Opt_Profiles2}. Notice that the profile which is least unstable, the one which is exponential in $\zeta$, has both the smallest value of $\mu(R_1)$ and the largest value of $\mu(R_2)$. Therefore, it has the smallest viscous jumps at the interfaces.

\subsection{Time dependence}
In the previous sections, we considered the growth rate only at time $\tau=0$. However, it is also important to understand how the growth rate changes with time. As time increases and the interfaces move outward, there are several physical factors at play. The curvature of the interfaces decreases which works to destabilize the flow while the velocity of the interfaces decreases which works to stabilize the flow. \cite{gin-daripa:hs-rad} studied constant viscosity flows and found that the above two competing effects lead to non-monotonic behavior of the maximum growth rate. However, the growth rate of constant viscosity flows behaves differently in the $\zeta$-coordinate system. We illustrate this first with some analytical results.

For two-layer constant viscosity flows, the growth rate in the $\zeta$-coordinates (see equation \eqref{VariableViscosity:2layerGR}) is
\begin{equation*}
 \sigma = \frac{Qn}{2 \pi R^2} \frac{\mu_o-\mu_i}{\mu_o+\mu_i} - \frac{T}{\mu_o+\mu_i} \frac{n^3-n}{R^3},
\end{equation*}
where $R(\tau)$ is the radius of the interface. By taking a derivative with respect to $n$ and setting equal to zero, the most dangerous wave number is \\$n_{max} = \sqrt{QR/(6 \pi T) (\mu_o-\mu_i) + 1/3}$. Plugging this into \eqref{VariableViscosity:2layerGR} gives 
\begin{equation*}
\begin{split}
 \sigma_{max} = &\frac{Q}{2 \pi R^2} \sqrt{\frac{QR}{6 \pi T}(\mu_o - \mu_i) + \frac{1}{3}} \left(\frac{\mu_o-\mu_i}{\mu_o+\mu_i}\right) \\
  - &\frac{T}{\mu_o+\mu_i} \sqrt{\frac{QR}{6 \pi T}(\mu_o - \mu_i) + \frac{1}{3}} \left(\frac{QR}{6 \pi T}(\mu_o - \mu_i) - \frac{2}{3}\right) \frac{1}{R^3}.
 \end{split}
\end{equation*}
Taking the derivative with respect to $R$ gives
\begin{equation*}
 \frac{\partial \sigma_{max}}{\partial R} = - \frac{T \left(\frac{Q}{\pi T} (\mu_o-\mu_i)R + 2 \right) \left(\frac{Q}{\pi T} (\mu_o-\mu_i)R+4 \right)}{12 R^4 (\mu_i+\mu_o) n_{max}}.
\end{equation*}
If $\mu_o > \mu_i$ then this expression is negative for all $R$. Therefore, $\sigma_{max}$ is a strictly decreasing function of $R$ and hence time $\tau$ in the $\zeta$-coordinates. 

\begin{figure}[!ht]
 \centering
 \includegraphics[width=.5\textwidth, height=.5\textwidth]{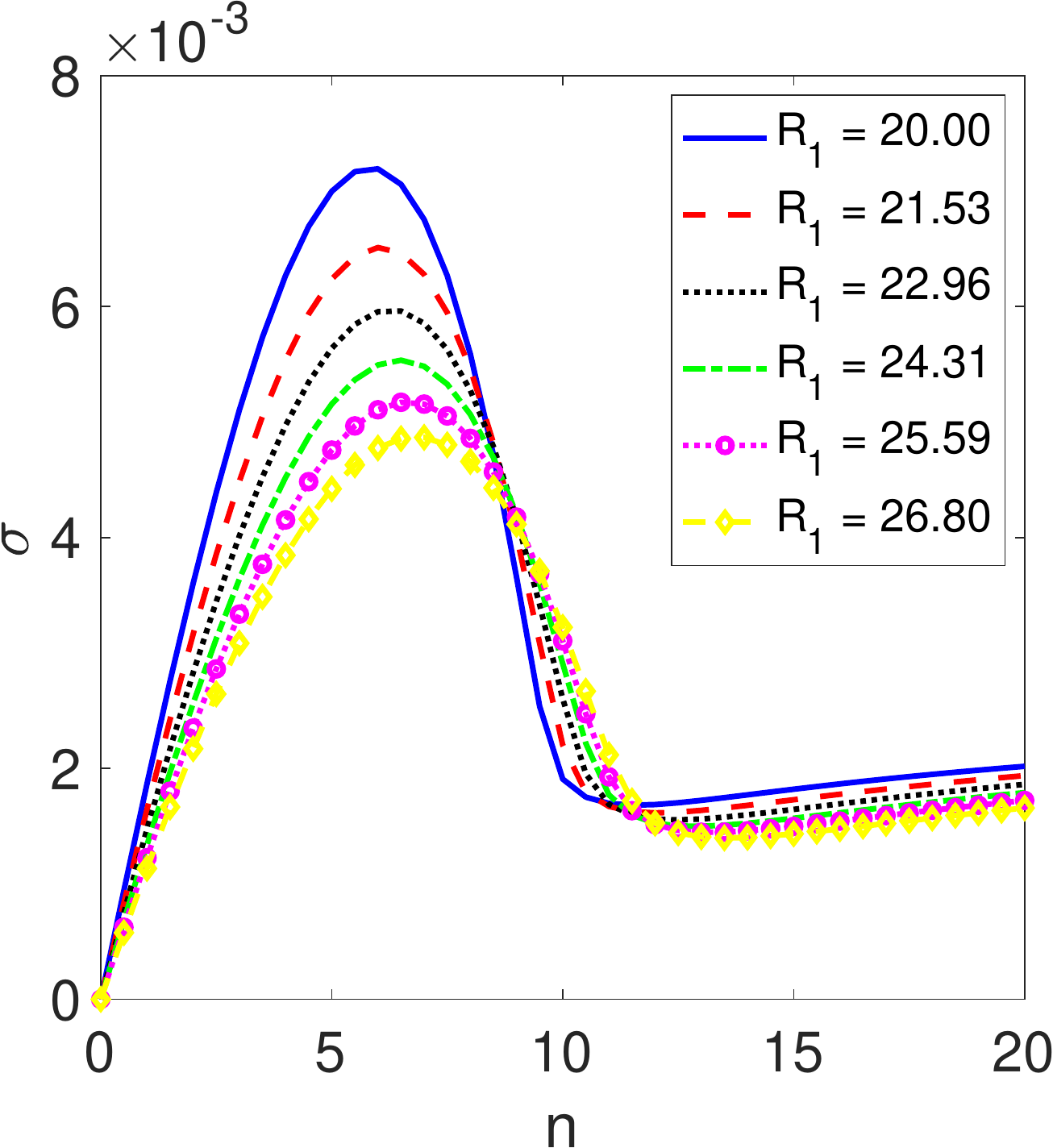}
 \caption{Several plots of the dispersion relation $\sigma$ vs $n$ at different times, as represented by the location of the inner interface. Parameter values are $Q = 10$, $\mu_i = 2$, $\mu_o = 10$, $\mu(R_1) = 5$, $\mu(R_2) = 7$, $T_1 = T_2 = 1$, $R_1(0) = 20$, and $R_2(0) = 30$.}
 \label{fig:DispRel_6times}
\end{figure}

For three-layer variable viscosity flows, there is an additional factor which affects the stability. The interfaces get closer together which makes the variable viscous profile steeper and works to destabilize the flow. Despite this fact, the numerical results that follow show that $\sigma_{max}$ is a decreasing function of time. This can be illustrated by the upper bound given by \eqref{VariableViscosity:AbsUpperBound}. The first two terms are strictly decreasing functions of $R_1$ and $R_2$ (and therefore of $\tau$) while the third term is independent of time. Therefore, the upper bound is a decreasing function of time. In Figure \ref{fig:DispRel_6times}, the dispersion relation is plotted at several different times for a typical variable viscosity flow. Initially, the inner interface is at $R_1 = 20$. As time increases, it moves outward. Note that the maximum value of $\sigma$ decreases with time. However, the difference is more pronounced near the maximum value, which is mostly affected by the stability of the interfaces, than for short waves which are mostly affected by the layer instability.

\begin{figure}[!ht]
\begin{subfigure}[b]{.49\textwidth}
  \centering
   \includegraphics[width=\textwidth, height=\textwidth]{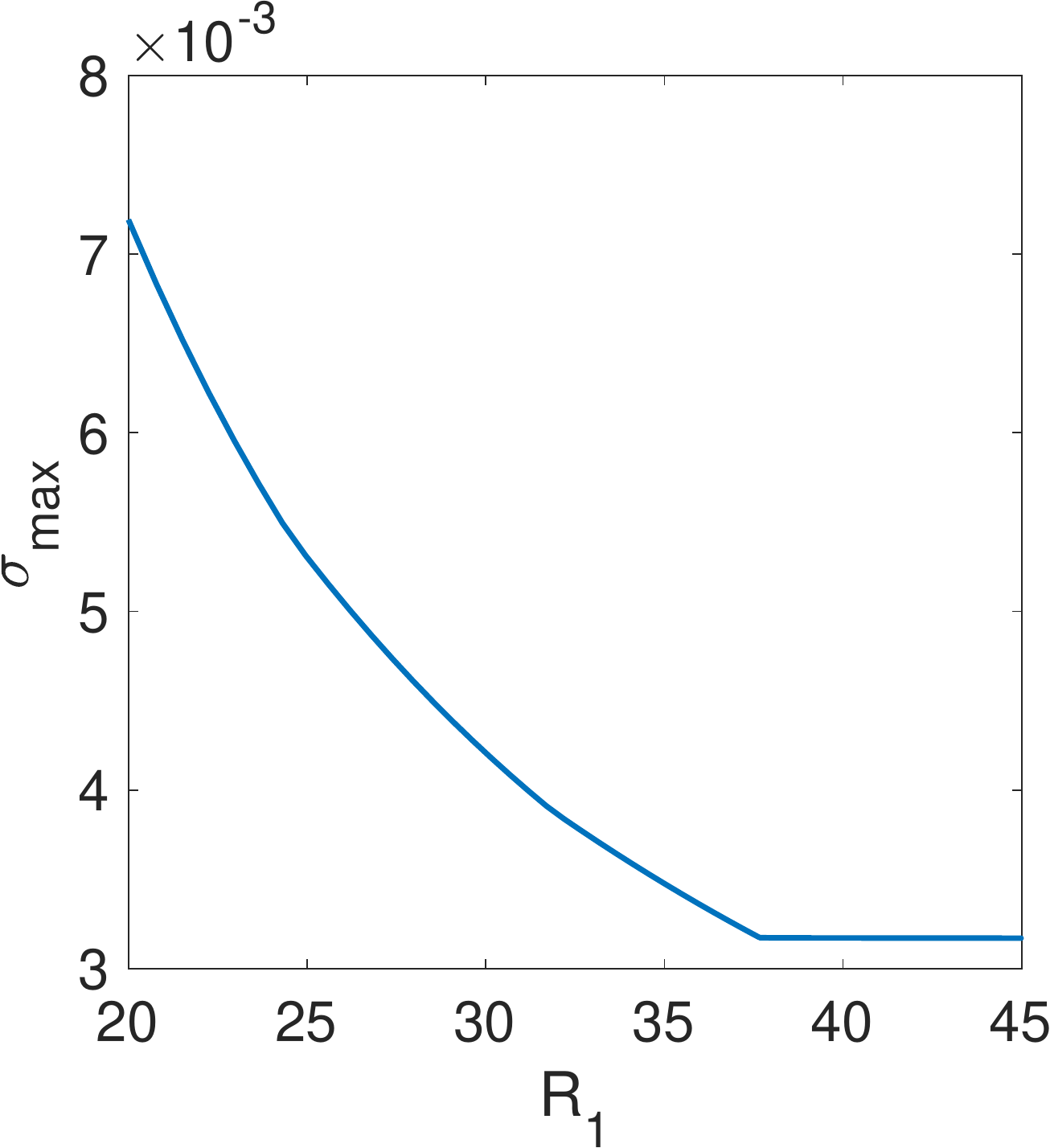}
  \caption{}
 \label{fig:SigmaMaxVsR1}
\end{subfigure}
\begin{subfigure}[b]{.49\textwidth}
  \centering
   \includegraphics[width=\textwidth, height=\textwidth]{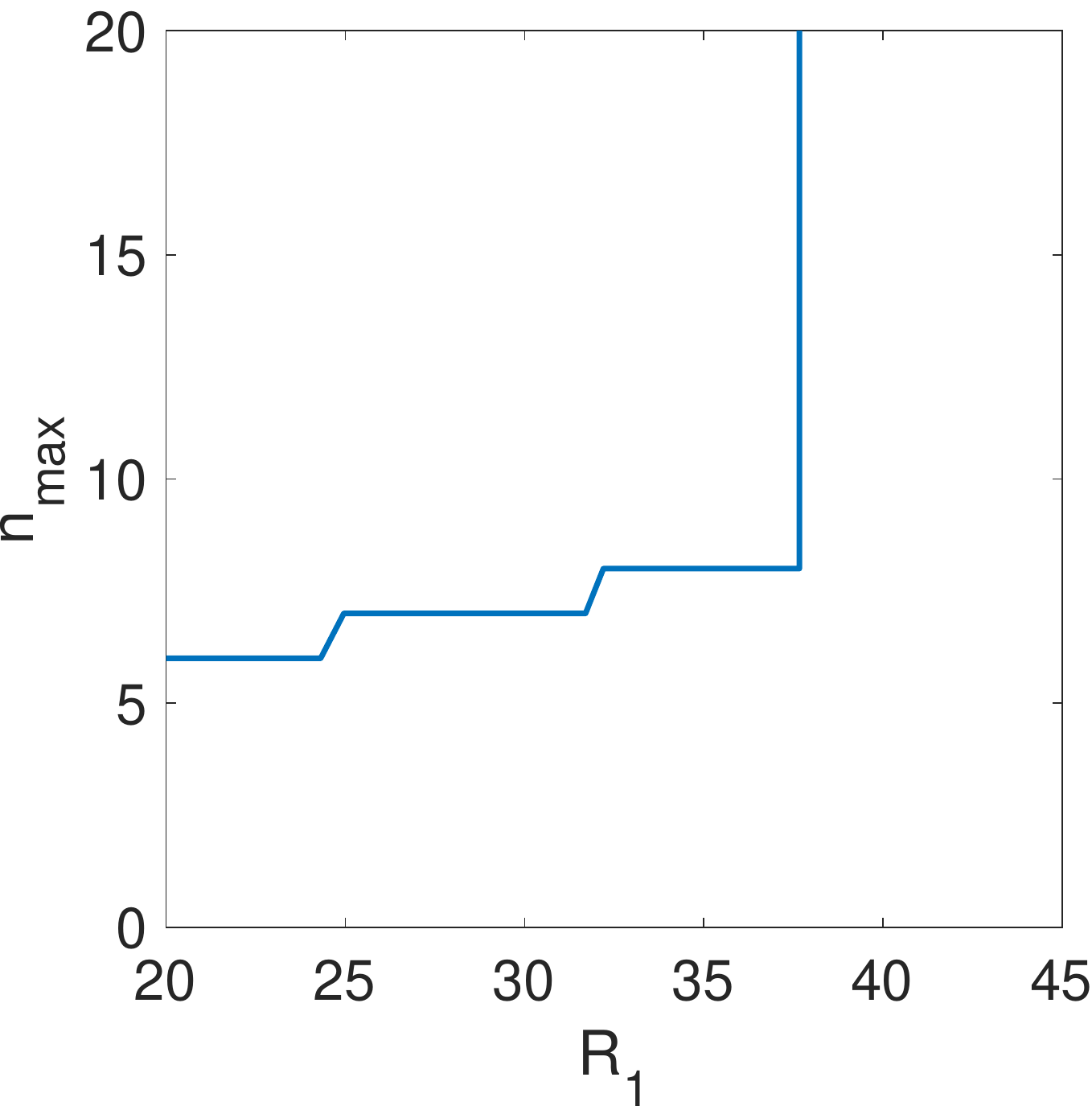}
  \caption{}
  \label{fig:nMaxVsR1}
\end{subfigure}
 \caption{Plots of (a) the maximum growth rate $\sigma_{max}$ versus the radius of the inner interface $R_1$ and (b) the most dangerous wave number $n_{max}$ versus the radius of the inner interface $R_1$. The parameter values are $Q = 10$, $\mu_i = 2$, $\mu_o = 10$, $\mu(R_1) = 5$, $\mu(R_2) = 7$, $T_1 = T_2 = 1$, $R_1(0) = 20$, and $R_2(0) = 30$.}
 \label{fig:MaxVsR1}
\end{figure}

In order to shed more light on this time-dependent behavior, we investigate how $\sigma_{max}$ and the most dangerous wave number $n_{max}$ evolve in time. Figure \ref{fig:SigmaMaxVsR1} is a plot of $\sigma_{max}$ versus $R_1$, and Figure \ref{fig:nMaxVsR1}  is a plot of $n_{max}$ versus $R_1$. Notice first that $\sigma_{max}$ is a decreasing function of time and $n_{max}$ is an increasing function of time. The fact that $n_{max}$ increases with time is a well-known fact for constant viscosity flows \citep{Cardoso/Woods:1995}. Also observe that there is a critical value $R_1^*$ such that for $R_1 > R_1^*$, $\sigma_{max}$ is constant and $n_{max}$ is infinite. This is the point at which the layer instability comes to dominate the flow. For $R_1 < R_1^*$, the instability of the interfaces dominates and the behavior or $\sigma_{max}$ and $n_{max}$ is similar to what happens for constant viscosity flow. For $R_1 > R_1^*$, the layer is more unstable than the interfaces, and therefore the short wave instability dominates and $\sigma_{max} = \displaystyle{\lim_{n \to \infty} \sigma(n)}$ . This behavior mimics what we see from the upper bound \eqref{VariableViscosity:AbsUpperBound} in which the two terms related to the interfaces are decreasing functions while the term related to the layer instability is constant.

\subsection{Validation of QSSA}
In Section \ref{sec:prelim}, we invoke the QSSA which assumes that the basic solution changes slowly in comparison to the growth of the disturbances. We now examine the validity of that assumption. Consider a two-layer constant viscosity flow in which the growth rate of a disturbance with wave number $n$ is given by equation \eqref{VariableViscosity:2layerGR}. The maximum growth rate over all wave numbers can be written as
\begin{equation*}
 \sigma_{max} = \frac{2T}{(\mu_o+\mu_i)R^3} \left( \frac{QR}{6 \pi T} (\mu_o-\mu_i) + \frac{1}{3} \right)^{\frac{3}{2}}.
\end{equation*}
The expression for the position of the interface is $R(\tau) = \sqrt{Q\tau/\pi + R(0)^2}$. In particular, $R \propto \sqrt{\tau}$. Therefore, $\sigma_{max} \propto \tau^{-3/4}$ for $\tau \gg 1$. In comparison, the interfacial position of the basic solution changes at a rate
\begin{equation*}
 \frac{1}{R} \frac{dR}{d\tau} = \frac{Q}{2 \pi R^2} \propto \tau^{-1}.
\end{equation*}
Therefore, for large $\tau$ the disturbances will grow faster than the basic solution. For three-layer variable viscosity flow, the QSSA has a more solid foundation. Recall from the previous subsection that the interfacial instability dominates at early times, but the instability of the middle layer dominates at later times. The upper bound \eqref{VariableViscosity:AbsUpperBound} demonstrates that the layer instability is bounded by a constant term that depends only on $\mu'(\zeta)$. This is further validated by the region of Figure \ref{fig:SigmaMaxVsR1} in which $\sigma_{max}$ is constant. Therefore, for variable viscosity flows the interfaces will be moving very slowly at later times while the growth of disturbances remains constant. 

Figure \ref{fig:QSSA_validation} is a numerical comparison of the growth rate of disturbances $\sigma_{max}$ and the rates of change of each individual interface of the base flow. The parameters used are the same as in Figure \ref{fig:MaxVsR1} and therefore the solid curve is the same as the curve in Figure \ref{fig:SigmaMaxVsR1}. Notice that the growth rate of the disturbances is always greater than the rate of change of the interfaces, but that this is especially true for later times. The late time behavior will be true even if the interfaces are stabilized by very large interfacial tension.
\begin{figure}[!ht]
 \centering
 \includegraphics[width=.5\textwidth]{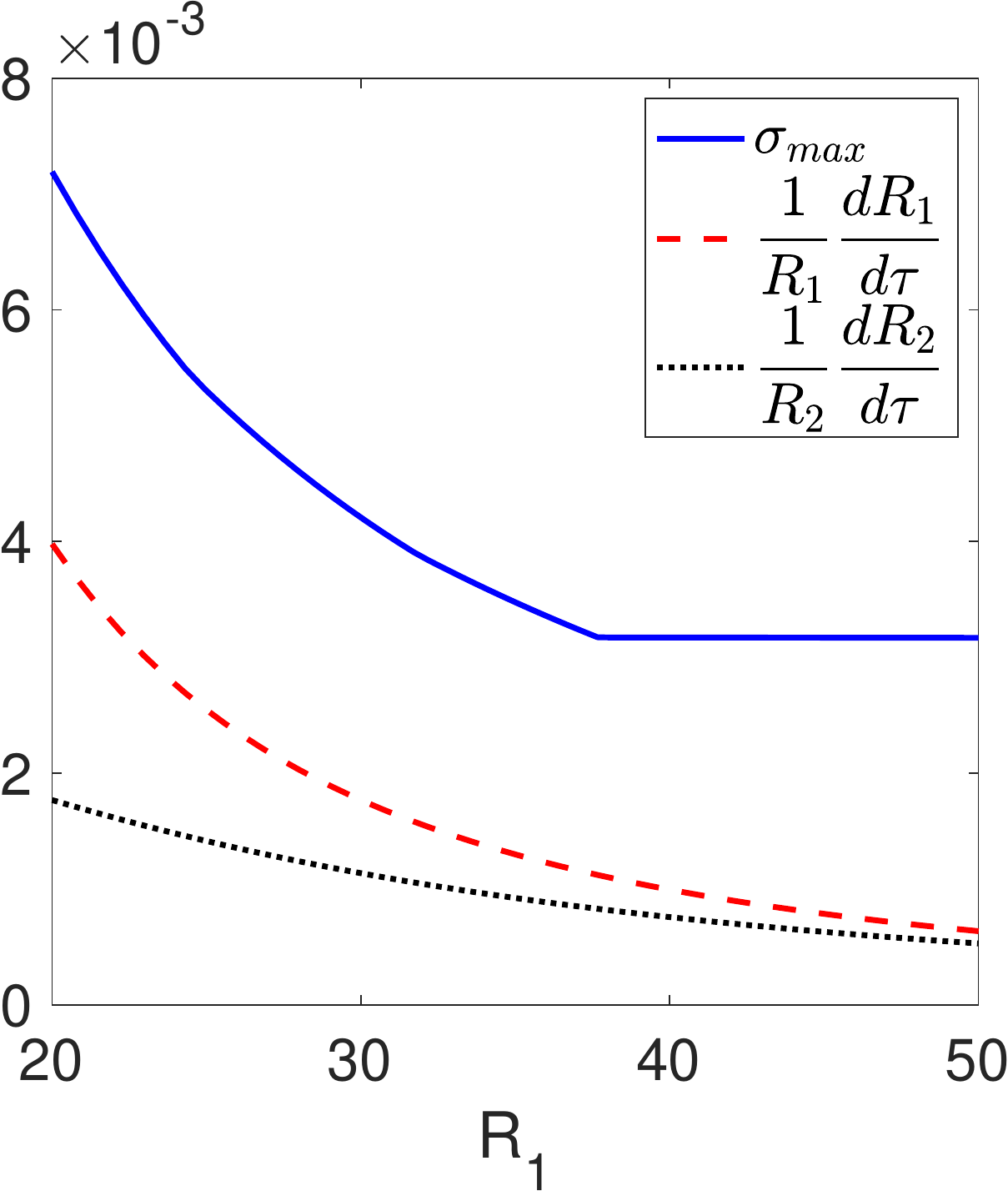}
 \caption{Plots of the maximum growth rate $\sigma_{max}$ and the rates of change of the interfaces of the basic flow. All parameter values are the same as Figure \ref{fig:MaxVsR1}: $Q = 10$, $\mu_i = 2$, $\mu_o = 10$, $\mu(R_1) = 5$, $\mu(R_2) = 7$, $T_1 = T_2 = 1$, $R_1(0) = 20$, and $R_2(0) = 30$.}
 \label{fig:QSSA_validation}
\end{figure}

\subsection{Variable Injection Rate}
Recent studies by \cite{Beeson-Jones/Woods:2015} and \cite{gin-daripa:constant_interfaces} explore the idea of using a variable injection rate $Q(t)$ to stabilize multi-layer constant viscosity flows. In these works, the maximum injection rate which results in a stable flow is investigated. Unfortunately, there is no injection rate which stabilizes a variable viscosity flow because short waves are always unstable. However, as an analogy, we can find the maximum injection rate that keeps the growth rate under a certain threshold. Figure \ref{fig:varQvstime} shows the maximum injection rate such that the maximum growth rate is below 0.001 for a certain constant viscosity flow and a certain variable viscosity flow. The constant viscosity flow has a viscosity of $\mu = 6$ in the middle layer while the variable viscosity flow has a linear viscous profile with $\mu(R_1) = 5$ and $\mu(R_2) = 7$. The variable viscosity flow allows for the fluid to be injected more quickly while maintaining the same level of instability.

\begin{figure}[!ht]
 \centering
 \includegraphics[width=.5\textwidth, height=.5\textwidth]{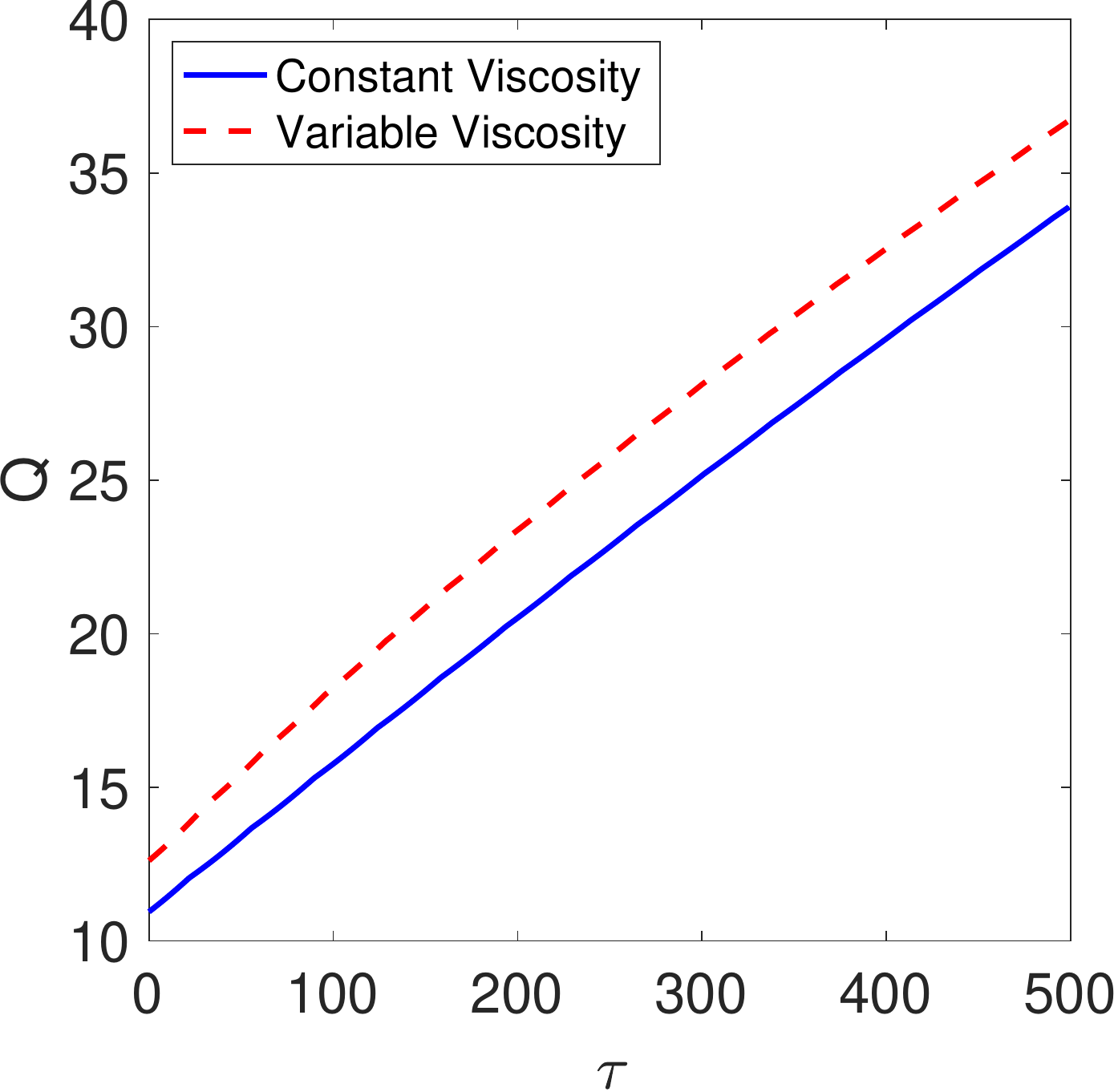}
 \caption{Plots of the maximum injection rate that results in a value of $\sigma_{max} \leq 0.001$ for a constant viscosity flow with $\mu = 6$ and a variable viscosity flow with $\mu(R_1) = 5$ and $\mu(R_2) = 7$. Other parameter values are $\mu_i = 2$, $\mu_o = 10$,  $T_1 = T_2 = 1$, $R_1(0) = 20$, and $R_2(0) = 30$.}
 \label{fig:varQvstime}
\end{figure}

\section{Conclusions}\label{sec:conclusions}
The stability of three-layer radial porous media flows with variable viscosity is an important issue in many applications. This work is the first to address this topic. First, the linear stability problem is formulated resulting in an eigenvalue problem with time-dependent coefficients and eigenvalue-dependent boundary conditions. This derivation requires an appropriate change of variables that fixes the positions of the interfaces and the viscous profile of the middle layer fluid. Several important analytical aspects of the eigenvalue problem are studied. First, upper bounds on the spectrum are derived using a variational approach. Then it is shown that for a certain bandwidth of wave numbers, there is a countably infinite set of positive eigenvalues, and the corresponding eigenfunctions are complete in a certain $L^2$ space.

The eigenvalues have been computed numerically in order to investigate the effect of various parameters on the stability of variable viscosity flows. The following are some of the key numerical results: (i) Variable viscosity flows can reduce the maximum growth rate by reducing the jumps in viscosity at the interfaces, but short waves become unstable; (ii) Short wave instability is dominated by the viscous gradient in the layer while long and intermediate wavelengths are dominated by the instability of the interfaces; (iii) The optimal viscous profile is the one which balances the interfacial instability with the instability of the layer; (iv) increasing interfacial tension decreases the viscous gradient of the optimal viscous profile; (v) A viscous profile which is exponential with respect to $\zeta$ is optimal among the types of profiles considered; (vi) $\sigma_{max}$ is a decreasing function of time. This is due to the instability of the interfaces decreasing with time while the layer instability remains relatively unchanged; and (vii) Variable viscosity flows allow for faster injection without making the flow more unstable.

\section*{Acknowledgment}
This work has been supported in part by the U.S. National Science Foundation grant DMS-1522782.

\bibliographystyle{plainnat}

\end{document}